\documentclass[12pt]{article}
\usepackage[letter paper, margin=1 in]{geometry}
\usepackage[utf8]{inputenc}
\usepackage{graphicx}
\usepackage{amsmath}
\usepackage[version=4]{mhchem}
\usepackage{siunitx}
\usepackage{longtable,tabularx}
\usepackage{microtype}
\usepackage{subcaption}
\usepackage{booktabs} 
\usepackage{float}

\usepackage{amsmath,amssymb,amsthm,mathtools}
\usepackage{bm,bbm}
\usepackage{algorithm,algpseudocode}
\usepackage{hyperref}
\usepackage{cleveref}
\usepackage[numbers,sort&compress]{natbib}

\usepackage{url}
\newcommand{\mbb}{\mathbb}
\newcommand{\mbf}{\mathbf}
\newcommand{\mcl}{\mathcal}

\newcommand{\R}{\mathbb{R}}

\newcommand{\T}{\textnormal}
\newcommand{\x}{\mathbf{x}}
\newcommand{\y}{\mathbf{y}}
\newcommand{\X}{\mathcal{X}}
\newcommand{\D}{\mathcal{D}}

\usepackage[normalem]{ulem} 

\newtheorem{proposition}{Proposition}
\newtheorem{assumption}{Assumption}
\newtheorem{theorem}{Theorem}

\newtheorem{lemma}{Lemma}

\title{Surrogate-Guided Adaptive Importance Sampling for Failure Probability Estimation}
\author{%
  Ashwin Renganathan\thanks{Aerospace Engineering and the Institute of Computational and Data Science (ICDS), Penn State, University Park, PA. Corresponding author.}%
  \and
  Annie S. Booth\thanks{Department of Statistics, Virginia Tech, Blacksburg, VA.}%
}
\date{}

\begin{document}

\maketitle

\begin{abstract}
We consider the sample efficient estimation of failure probabilities from expensive oracle evaluations of a limit state function via importance sampling (IS). In contrast to conventional ``two-stage'' approaches, which first train a surrogate model for the limit state and then construct an IS proposal to estimate the failure probability using separate oracle evaluations, we propose a ``single-stage'' approach where a Gaussian process surrogate and a surrogate for the optimal (zero-variance) IS density are trained from shared evaluations of the oracle, making better use of a limited budget. With this approach, small failure probabilities can be learned from relatively few oracle evaluations.  We propose \emph{kernel density estimation adaptive importance sampling} (\texttt{KDE-AIS}), which combines Gaussian process surrogates with kernel density estimation to adaptively construct the IS proposal density, leading to sample efficient estimation of failure probabilities. We show that the \texttt{KDE-AIS} density asymptotically converges to the optimal zero-variance IS density in total variation. Empirically, \texttt{KDE-AIS} enables accurate and sample efficient estimation of failure probabilities, outperforming state-of-the-art competitors including previous work on Gaussian process based adaptive importance sampling.
\end{abstract}

{\bf Keywords.} computer experiment, Gaussian process, kernel density estimation, reliability

\section{Introduction}
\label{sec:intro}

The problem of estimating the probability of a rare event using data queried from an expensive blackbox computer model (``oracle'') simulating the event finds ubiquitous applications in climate science~\cite{naveau2020statistical}, engineering reliability analysis~\cite{renganathan2023camera,booth2024actively}, and geophysics~\cite{love2012credible}, to name a few.
Let $g(\x) :\X \rightarrow \R$ be an expensive oracle, with inputs $\x \in \X \subset \R^d$; we assume $\X$ is a compact set and $g$ is bounded above and below in $\X$. In the present context, $g$ is called a ``limit state'' function, with a threshold $t \in \R$, with $F = \{\x : g(\x) > t,~\x \in \mcl{X}\}$ being an event of interest (typically system failure). Let $(\Omega,\mcl{F},\mbb{P})$ be a probability space, and let
$X:(\Omega,\mcl{F})\to(\R^{d},\mcl{B}(\R^{d}))$ be an $\R^{d}$-valued random vector.
Assume $X$ admits a known density $p:\R^{d}\to[0,\infty)$ with respect to the Lebesgue measure. Then, we are interested in estimating the ``failure probability''
\begin{equation}
P_F = \mbb{P}(X\in F)=\int_{F} p(\x)\,\mathrm{d}\x,~ \forall ~F\in\mcl{B}(\R^{d}) \equiv \int_{\mcl{X}} \mathbbm{1}_{\{g(\x) > t\}} p(\x) \; d\x.
\label{eqn:fp_def}
\end{equation}
We specifically consider a situation where $g(\x)>t$ falls in the tails of $p$, and hence is a \emph{rare} event according to $p$. The overarching goal of this work is to estimate $P_F$ accurately with as few oracle evaluations as possible (100's as opposed to 1000's as is typical in the literature).

If $g$ is an oracle, then $P_F$ is not known in closed form and may be estimated with a naive Monte Carlo (MC) approximation:
\[
    P_F \approx \widehat P_F^\T{MC} = \frac{1}{N}\sum_{i=1}^N \mathbbm{1}_{\{g(X_i) > t\}}
    \quad\T{for}\quad X_i\sim p(\x), \;\; i=1,\dots,N.
\]
For rare event probability estimation, the naive MC estimator is known to incur very high variance; an easy remedy is to reweight \eqref{eqn:fp_def} with another density $q$ to obtain 
\[
     P_F = \int_{\mcl{X}} \mathbbm{1}_{\{g(\x) > t\}}  w(\x) q(\x) \; d\x,
\]
where $w(\x) = \frac{p(\x)}{q(\x)}$ are the importance weights for the corresponding MC estimator, also known as the importance sampling (IS)~\cite{srinivasan2002importance}  estimator, given by:
\[
    P_F \approx \widehat P_F^\T{IS} = \frac{1}{M}\sum_{i=1}^M \mathbbm{1}_{\{g(X_i) > t\}} w(X_i)
    \quad\T{for}\quad X_i\sim q(\x), \;\; i=1,\dots, M,
\]
where $q$ is chosen such that it is either easier to sample from or has more desirable properties than $p$. If $q$ is chosen well, for example, to hold a high probability in the failure regions, then significant variance reduction can be achieved for $M \ll N$. On the other hand, a poor choice of $q$ can result in the variance of $\widehat P_F^\T{IS}$ exceeding that of $\widehat P_F^\T{MC}$. Therefore, choosing a good $q$ is crucial, but it is not straightforward because $g$ is an oracle with unknown structure. A surrogate model is commonly used to inform the estimation of $q$, see e.g., \cite{peherstorfer2016multifidelity, booth2024actively, renganathan2023camera,li2011efficient,dubourg2013mbis}.

The variance of the IS estimator is given as
\[
\mathrm{Var}_q\!\left(\widehat P_F^{\text{IS}}\right)
= \frac{1}{M}\,
\mathrm{Var}_q\!\left(\mathbbm{1}_{\{g(X)>t\}}\,w(X)\right)
= \frac{1}{M}\left(
\int_{\mathcal X} \mathbbm{1}_{\{g(\x)>t\}}
\frac{p(\x)^2}{q(\x)}\,d\x
\;-\; P_F^2
\right).
\]
Then, it can be shown that the \emph{optimal} IS density $q^*$ is the one that results in zero variance of the estimator $\widehat P_F^\T{IS}$ and is given as
\[
q^{\star}(\mathbf{x})
\;=\;
\frac{\mathbbm{1}_{\{g(\mathbf{x})>t\}}\,p(\mathbf{x})}{P_F}.
\]
Naturally, the optimal density $q^\star$ is impossible to estimate unless we know $P_F$ itself. However, a density that is $\propto \mathbbm{1}_{\{g(\mathbf{x})>t\}} p(\x)$ serves as a good target. Although we don't know $\mathbbm{1}_{\{g(\mathbf{x})>t\}}$, a consistent approximation of it could be very fitting.

The proposal density $q$ can be chosen with the help of a surrogate model. Specifically, if $g$ is approximated with a surrogate model $\hat{g}$, then $\hat{g}$ can, in turn, be used to approximate the set $F$, which in turn maybe used to inform the choice of $q$, e.g., using kernel density estimation~\cite{ TerrellScott1992,Tsybakov2009,Botev2010}. There are several works from the past decade that use this approach: first, construct a surrogate model $\hat{g}$ for $g$ using observations $g(\x_i),~i=1, \ldots, n$; second, use $\hat{g}$ to propose a $q$; and finally, compute $\widehat P_F^\T{IS}$ using separate oracle evaluations sampled from $q$. We call such an approach ``two-stage,'' due to the two disconnected stages: constructing a surrogate and then estimating $P_F$. The main drawback of the two-stage approach is that expensive evaluations of $g$ used to train the surrogate are not reusable for estimating $P_F$ because the surrogate $\hat{g}$ is generally fit with a global approximation goal. It is likely that most of the evaluations of $g$ used to train $\hat{g}$ are not in $F$ for them to be useful in estimating $P_F$.

In the conventional two-stage approach, it is often argued that the central burden lies in building an accurate surrogate of the limit state, while the construction of the biasing density $q$ is treated as secondary~\cite{peherstorfer2016multifidelity}. In this regard, oracle evaluations are prioritized for surrogate-based active learning of the failure boundaries.  Then, remaining evaluations are sampled (hopefully in $F$) using the surrogate-informed $q$.
We take the opposite view: \emph{the biasing density is paramount for accurately estimating $P_F$ and should be prioritized}. In this regard, we aim to optimally choose oracle evaluations that serve both the surrogate training and fitting $q$. Indeed, if one had access to the optimal (zero-variance) IS density $q^*$, a single sample suffices to recover the exact failure probability. We briefly formalize this statement and illustrate it with a two-dimensional toy example.
\begin{proposition}
\label{prop:optimalq}
If $X_1,\dots, X_n \stackrel{\mathrm{i.i.d.}}{\sim} q^\star$, then for every $n\ge 1$,
\[
\widehat P_F^{\rm IS} = P_F \quad \text{almost surely}.
\]
In particular, the estimator has zero variance, and a single sample suffices to obtain the exact value of $P_F$.
\end{proposition}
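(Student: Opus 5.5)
The plan is to show that each summand of the IS estimator, $\mathbbm{1}_{\{g(X_i)>t\}}\,w(X_i)$ with $w=p/q^\star$, equals $P_F$ almost surely; averaging $n$ copies of the same constant then gives $P_F$ for every $n\ge 1$. Throughout we assume $P_F>0$, so that $q^\star$ is a well-defined density. We also fix a version of $q^\star$ with the indicator $\mathbbm{1}_{\{g(\x)>t\}}$ in the numerator. Importance weights are only defined where $q^\star>0$, and we will show this is a full-measure event.

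First, identify the support of $q^\star$. Let $S=\{\x\in\mcl X: g(\x)>t,\ p(\x)>0\}$. Then $q^\star(\x)>0$ exactly on $S$, and $q^\star=0$ on $\mcl X\setminus S$. Hence $\mbb P_{q^\star}(X_i\in S)=\int_S q^\star(\x)\,d\x=\frac{1}{P_F}\int_F p(\x)\,d\x=1$, using \eqref{eqn:fp_def} and the fact that removing $\{p=0\}$ from $F$ does not change the integral. So each $X_i\in S$ almost surely. By countable (here finite) subadditivity, the event $\bigcap_{i=1}^n\{X_i\in S\}$ also has probability one.

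Second, evaluate the summand on $S$. For $\x\in S$ we have $\mathbbm{1}_{\{g(\x)>t\}}=1$. Also $w(\x)=p(\x)/q^\star(\x)=p(\x)\,P_F/p(\x)=P_F$, which is legitimate because $p(\x)>0$. So the summand equals $P_F$ on $S$. Therefore, on the probability-one event above, $\widehat P_F^{\rm IS}=\frac1n\sum_{i=1}^n P_F=P_F$.

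The variance statement then follows immediately, since an almost surely constant random variable has zero variance. Alternatively, plug $q^\star$ into the variance formula: $\int_F p^2/q^\star\,d\x=P_F\int_F p\,d\x=P_F^2$. The case $n=1$ gives the single-sample claim.

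The only delicate step is measure-theoretic bookkeeping. The weight $p/q^\star$ is undefined off $S$, and $q^\star$ is only defined up to null sets. Both issues are resolved by working on the full-measure event $\{X_i\in S\ \forall i\}$ and noting that the conclusion does not depend on which version of $q^\star$ is chosen.
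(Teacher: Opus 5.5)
Your proof is correct and takes the same route as the paper. Like the paper, you show that each summand $\mathbbm{1}_{\{g(X_i)>t\}}\,p(X_i)/q^\star(X_i)$ equals $P_F$ almost surely because $X_i\in F$ a.s.\ under $q^\star$, average to get $P_F$, and conclude zero variance; your added bookkeeping (assuming $P_F>0$ and working on $S=F\cap\{p>0\}$ so the weight is well defined) makes explicit details the paper leaves implicit.
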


\begin{proof}
Under $q^\star$, $ X_i\in F$ almost surely, hence $\mathbbm 1_F( X_i)=1$ a.s. Moreover,
\[
w(X_i)
=
\frac{p(X_i)}{q^\star( X_i)}
=
\frac{p( X_i)}{p( X_i)\,\mathbbm 1_F(X_i)/P_F}
=
P_F \quad \text{a.s.}
\]
Therefore, each summand in the IS estimator equals $P_F$ a.s., so their average equals $P_F$ a.s.; hence, the variance is $0$.
\end{proof}

\begin{figure}[htb!]
    \centering
    \includegraphics[width=1\textwidth]{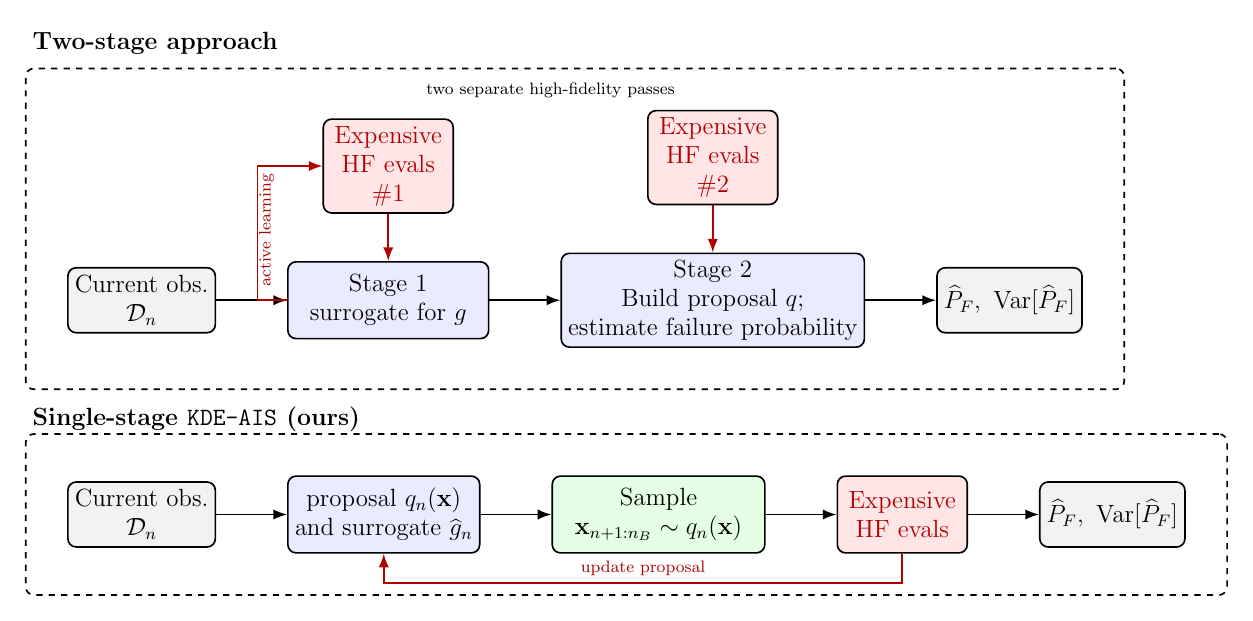}
    \caption{Overview of the proposed ``single-stage'' approach where high fidelity (HF) oracle evaluations inform both the biasing density and the surrogate, in contrast to existing ``two-stage'' approaches.}
    \label{fig:overview}
\end{figure}

In this work, we seek to emulate $q^\star$ as opposed to emulating $g$~\cite{renganathan2023camera} or contours of $g(\x) = t$~\cite{booth2025two, booth2025contour}; in the process, however, we show that $g(\x) = t$ is also accurately emulated. We develop sequential approximations that are guaranteed to recover $q^\star$ asymptotically -- this is popularly known as adaptive importance sampling (AIS)~\cite{bugallo2017adaptive,oh1992adaptive}. Crucially, we
take a \emph{single-stage} approach, where a surrogate of the limit state $\hat{g}$ and the estimate $\widehat P_F$ are obtained using the \emph{same} sample evaluations of $g$; this way, we hope to 
accurately estimate $P_F$ with substantially fewer evaluations of $g$ compared to two-stage approaches.
The idea is to sequentially update both $\hat{g}$ and $\hat{q}$ using these evaluations, which then leads to a sequential update to $\widehat P_F$. Our objective in this process is to ensure that both $\hat{g}$ and $\hat{q}$ are consistent with $g$ and $q^\star$, respectively, and that the estimate $\widehat P_F$ has diminishing variance. We employ kernel-based methods, specifically Gaussian process (GP)~\cite{rasmussen:williams:2006,gramacy2020surrogates} models and kernel density estimation (KDE), as choices to learn the surrogate $\hat{g}$ and the biasing density approximation $\hat{q}$, respectively. \Cref{fig:overview} provides a schematic of our proposed method, contrasting it against the conventional two-stage approach.

\subsection{Related work}
\label{sec:related}
Classical reliability methods such as the first-order reliability method (FORM) and the second-order reliability method (SORM) are computationally efficient, but they have several well-known limitations. Both belong to the class of ``local reliability methods'' that first transform the basic random variables into a standard normal space and then make first/second order polynomial approximations of the limit-state~ \cite{madsen2006methods,huang2018reliability, huang2017overview,zhao1999response}. As a consequence, their accuracy deteriorates when the limit state is highly nonlinear, nonconvex, multimodal, or exhibits strong interactions among variables \cite{der2000geometry,hu2021second}. Crucially, they often require gradient and Hessian information of the limit-state function, which may be prohibitive in several real-world applications \cite{madsen2006methods,zhao1999response}. 

The limitations of classical FORM/SORM methods are easily overcome by surrogate-based methods. To be useful in estimating failure probabilities, a surrogate must be trained to accurately identify failure boundaries (i.e., contour location).  Seminal work on adaptive GPs for contour finding was performed by \cite{ranjan2008sequential} and \cite{bichon2008efficient}, who used \cite{jones1998efficient}'s expected improvement framework. 
Others have leveraged stepwise uncertainty reduction \cite{bect2012sequential,chevalier2014fast,azzimonti2021adaptive,duhamel2023version}, predictive entropy \cite{marques2018contour,cole2021entropy}, weighted prediction errors \cite{picheny2010adaptive}, or distance to the contour \cite{gotovos2013active} to target failure contours. Then, leveraging the ``two-stage'' framework, unbiased estimates of $P_F$ are obtained via importance sampling using additional evaluations of the expensive oracle. For instance, \cite{dubourg2013mbis} uses an adaptive surrogate model along with importance sampling to construct a quasi-optimal biasing distribution. \cite{peherstorfer2016multifidelity} proposed using a surrogate to identify inputs from $p(\x)$ that are predicted to fail, then fitting a Gaussian mixture model \cite{reynolds2015gaussian} to those locations for the biasing density.  
Several other surrogate assisted IS approaches, e.g., ~\cite{dubourg2013mbis,balesdent2013akis,cadini2014improvedAKIS}, and refinements with stratified/directional IS, system reliability, mixture fitting, and multiple importance sampling (MIS) reuse~\cite{persoons2023akis,xiao2020aksis} also exist. Another notable line of work is subset simulation, which breaks $P_F$ into products of larger conditional probabilities~\cite{au2001ss}; this idea has also been combined with active learning~\cite{huang2016akss,zhang2019akss,bect2017bayesSubSim}. Recent work separates surrogate and sampling errors and offers stopping rules ~\cite{booth2025two}. Yet these ``two-stage'' approaches are limited by their disjoint use of expensive evaluations for estimation of $\hat{g}$ and $\hat{q}$ (\Cref{fig:overview}), and may end up costing several thousands of oracle evaluations to accurately estimate $P_F$.

On the density estimation side, beyond parametric Gaussian/mixture proposals, \emph{nonparametric} and \emph{learned} transport importance sampling have been increasingly explored. Classic work estimated near-optimal IS densities by kernel density estimators from pilot samples, with unbiasedness and efficiency characterizations~\cite{ang1992kernel}. In reliability, AIS schemes with kernel proposals and Markov chain Monte Carlo exploration of failure regions have been attempted~\cite{au1999aiskernel,lee2017kdeis}. Nonparametric IS shows strong performance in rare events~\cite{li2021npis}. Separately, normalizing-flow proposals learn flexible transports toward failure sets~\cite{dasgupta2024rein}. Our proposed approach seamlessly integrates with any of these density estimation methods; however, we chose kernel density estimation to prove consistency results on our proposal.

Our approach closely resembles the ``GP adaptive importance sampling (GPAIS)'' approach by \cite{dalbey2014gpais}, where a GP surrogate approximation is used for $g$ to build an estimate of $q^\star$, but our contributions offer several notable improvements. Whereas GPAIS parametrizes the proposal directly from GP exceedance/expected-indicator quantities, we use the GP only to produce soft failure probabilities and then fit a separate nonparametric density model for the proposal. Second, GPAIS lacks any built-in mechanism that guaranties the exploration of $\mcl{X}$, and hence can miss isolated failure regions if the seed samples to the GP are not ``diverse'' enough.  In contrast, our \texttt{KDE-AIS} proposal is guaranteed to densely sample $\mcl{X}$, and therefore won't miss any failure regions. Third, the theoretical guaranties in \texttt{KDE-AIS} extend beyond the unbiasedness and lower variance of the MIS estimator from GPAIS. We show that our proposal recovers the optimal $q^\star$, and hence our estimator converges to the true $P_F$ asymptotically. Crucially, GPAIS cannot offer this guarantee because their sampling is not guaranteed to be dense.
Fourth, unlike GPAIS, \texttt{KDE-AIS} uses deterministic-mixture MIS over the full history of proposals and then adds an explicit multifidelity (MF-MIS) estimator. Finally, we show that \texttt{KDE-AIS} performs empirically better than GPAIS in our experiments.

\subsection{Contributions}
Addressing the aforementioned gaps in the literature, our contributions are summarized as follows.

\begin{enumerate}
    \item We introduce a GP surrogate combined with a smoothing parameter $\alpha$ to construct a continuously evolving proxy target $q_n^\dagger$, which guards against surrogate bias during early iterations and promotes early exploration.
    \item We introduce a proposal $q_n$ that combines $q_n^\dagger$ and the input density $p$ using an exploration parameter $\eta$; this ensures that, asymptotically, the domain $\mcl{X}$ is densely sampled. This is a stark improvement over GPAIS.
    \item In addition to unbiasedness, our estimator is endowed with two notable features: (i) a complete reuse of all past proposals to $q$ via a balance heuristic and (ii) a multifidelity extension (MF-MIS) that shows improved sample efficiency compared to a traditional MIS estimator and has provably lower variance (\Cref{lem:mfmis-var-comparison}), as long as the surrogate evaluations are not too negatively correlated with the oracle evaluations.
    \item We show the following theoretical results (\Cref{thm:prop-conv}).
    \begin{enumerate}
        \item \label{result1} Our proxy target $q_n^\dagger$ has bounded error with $q^\star$ in total variation, which vanishes asymptotically.
        \item \label{result2} Under mild conditions on the exploration parameter $\eta$, our weighted proposal $q_n$ converges to $q_n^\dagger$ asymptotically while guaranteeing perfect emulation of $g$.
        \item Results in \ref{result1} and \ref{result2} are independent of the choice of the density estimation method. When a KDE approximation $\widehat q_n$ is used for $q_n$, we show that this approximation error also asymptotically vanishes. 
        \item As a consequence of \ref{result1} and \ref{result2}, we show that our estimate of $\widehat P_F$ asymptotically converges to $P_F$ with zero variance.
    \end{enumerate}
    \item Empirically, we demonstrate that our approach has improved sample efficiency and lower variance compared to several state-of-the-art methods, based on synthetic and real-world experiments.    
\end{enumerate}

The rest of the article is organized as follows. We provide the mathematical background on our methods in \Cref{sec:background} followed by the details of our method in \Cref{sec:method}; we discuss theoretical properties of our method in \Cref{sec:theory}. We demonstrate our method on synthetic and real-world experiments in \Cref{sec:experiments} and provide concluding remarks in \Cref{sec:conclusions}.

\section{Background}
\label{sec:background}

\subsection{Gaussian process surrogates}
\label{ss:GP}
The primary ingredient of our method is a Gaussian process surrogate model for the limit state function $g$.
 Denote observations of $g$ as $y_i = g(\x_i),~i=1,2,\ldots, n$. 
Let $\mbf X_n$ denote the stack of $n$ rows of $\x_i^\top,~i=1,\ldots,n$.  Let $\y_n$ denote the corresponding response vector.  A GP model assumes a multivariate normal distribution over the response, e.g., $\y \sim \mcl{GP}(0, \Sigma(\mbf X))$, where the covariance function $\Sigma(\mbf X)$ captures the pointwise correlations among observed locations, and
is typically a function of Euclidean distances, i.e., $\Sigma(\mbf X)^{ij} = k(||\x_i - \x_j||^2)$; see \cite{santner2003design,rasmussen:williams:2006,gramacy2020surrogates} for reviews.  Conditioned on observations $\D_n = \{\mbf X_n, \y_n\}$, the posterior predictive distribution at input $\x$ is also Gaussian and follows 
\begin{equation}
    Y_n(\x) | \D_n \sim \mcl{GP}\left(\mu_n(\x), \sigma^2_n (\x)\right)
    \quad\T{where}\quad
    \begin{aligned}
    \mu_n(\x) &= \Sigma(\x, \mbf X_n)\Sigma(\mbf X_n)^{-1}\y_n \\
    \sigma^2_n(\x) &=  \Sigma(\x) - \Sigma(\x, \mbf X_n)\Sigma(\mbf X_n)^{-1}\Sigma(\mbf X_n, \x).
    \end{aligned}
    \label{e:GP}
\end{equation}
Throughout, subscript $n$ is used to denote quantities from a surrogate trained on $n$ data points.  The posterior distribution of \Cref{e:GP} provides a general probabilistic surrogate model that can be used to approximate the limit state function. 

The uncertainty quantification provided by the GP facilitates Bayesian decision-theoretic updates to the surrogate model in a principled fashion -- popularly known as ``active learning''~\cite{santner2003design}.  Given an initial design $\D_n$ and some ``acquisition'' function $h(\x \mid \D_n)$ that quantifies the utility of a candidate input $\x$, the next input location may be optimally chosen as $        \x_{n+1} = \arg\max_{\x\in\X}~ h(\x \mid \D_n).$
The oracle is evaluated at $\x_{n+1}$, the data is augmented with $\{\x_{n+1}, g(\x_{n+1})\}$, the sample size is incremented to $n\leftarrow n+1$, and the process is repeated until the allocated budget is exhausted. This approach has been used for estimating failure probability with GPs: see, e.g., \cite{renganathan2023camera,ranjan2008sequential,bichon2008efficient,echard2011akmcs}.
In this work, our acquisitions are directly sampled from the current approximation for the biasing density $\hat{q}_n$, which circumvents any inner optimization.

\subsection{Adaptive importance sampling}\label{subsec:AIS}
\label{sec:ais}
Adaptive importance sampling refers to the adaptive improvement of the estimate of the biasing density in terms of reducing the variance of the importance sampling estimator~\cite{bugallo2017adaptive}. In this work, we make data-driven updates to a nonparametric approximation $\widehat q_k,~k=1,2,\ldots$ to the optimal (zero-variance) IS density $q^\star$. This, in turn, leads to an adaptively improving estimate of the failure probability. Specifically, we use a multiple importance sampling estimator that re-weights all samples up to the current iteration $k$ with a mixture denominator defined as follows. At iteration $k$ we draw $X_{k,i}\sim \widehat q_k$ ($i=1,\dots,n_k$). Then the current MIS estimator is given as
\[
\widehat{P_F}^{\text{MIS}}
= \frac{1}{N}\sum_{k=1}^K \sum_{i=1}^{n_k} 
\frac{p(X_{k,i})}{\bar{q}(X_{k,i})}\,\mathbbm 1(X_{k,i}),
\quad
\bar{q}(\mathbf{x}) \;=\; \sum_{j=1}^K \nu_j \widehat q_j(\mathbf{x}),\;\;
\nu_j=\frac{n_j}{N},\;\; N=\sum_{j=1}^K n_j,
\]
known as the \emph{deterministic mixture} or \emph{balance heuristic} \cite{VeachGuibas1995,ElviraMIS}. Deterministic mixture weights can substantially reduce weight variability compared to a naive IS estimator while retaining unbiasedness \cite{Owen2013,ElviraMIS}.

\subsection{Kernel density estimation}
We use kernel density estimation to develop an approximation $\widehat q_k$. KDE is a classical nonparametric approach to approximating an unknown density from samples \cite{rosenblatt1956,parzen1962,silverman1986,wandjones1995,scott2015}. 
In our setting, the \emph{biasing} (proposal) density is denoted by $q$ and the KDE approximation by $\widehat q$.
Let $X_1,\ldots,X_n \in \mathbb{R}^d$ be i.i.d.\ draws from $q$, and let $\mathbf{x}\in\mathbb{R}^d$ denote a point at which the density is evaluated. 
Given a kernel $K:\mathbb{R}^d\to\mathbb{R}$ with $\int K(\mathbf{u})\,\mathrm{d}\mathbf{u}=1$, $\int \mathbf{u}\,K(\mathbf{u})\,\mathrm{d}\mathbf{u}=\mathbf{0}$, and finite second moments, and a positive definite bandwidth matrix $\mathbf{H}\in\mathbb{R}^{d\times d}$, the multivariate KDE is
\[
  \widehat q(\mathbf{x})
  \;=\; \frac{1}{n}\sum_{i=1}^n K(\mathbf{x}-X_i),
  \qquad 
  K(\mathbf{u}) \;=\; |\mathbf{H}|^{-1/2}\, K\!\big(\mathbf{H}^{-1/2}\mathbf{u}\big).
\]
A common special case is the isotropic bandwidth $\mathbf{H}=h^2\mathbf{I}_d$ with scalar $h>0$, in which case
\[
  \widehat q(\mathbf{x})
  \;=\; \frac{1}{n h^d}\sum_{i=1}^n K\!\left(\frac{\mathbf{x}-X_i}{h}\right).
\]
Typical choices of $K$ include the Gaussian kernel $K(\mathbf{u})=(2\pi)^{-d/2}\exp(-\tfrac12\|\mathbf{u}\|_2^2)$ and compactly supported kernels, e.g., \cite{epanechnikov1969non, silverman1986kernel}.
Under the conditions $h\to0$ and $n h^d\to\infty$, $\widehat q(\mathbf{x})\to q(\mathbf{x})$ pointwise and in $L_2$ \cite{silverman1986,wandjones1995}.

The bandwidth parameter (or more generally, bandwidth matrix) dominates performance; the precise kernel choice is much less important \cite{silverman1986,wandjones1995,scott2015}.  We use the ``normal-reference'' rule of thumb in selecting the bandwidth parameter. If $q$ is approximately Gaussian with covariance $\boldsymbol{\Sigma}$, a convenient full-matrix choice is
  \begin{equation*}
    \mathbf{H}_{\mathrm{NR}} 
    \;=\; \left(\frac{4}{d+2}\right)^{\!\frac{2}{d+4}} n^{-\frac{2}{d+4}} \,{\boldsymbol{\Sigma}},
  \end{equation*}
  which reduces in $d=1$ to Silverman’s rule $h_{\mathrm{NR}} \approx 1.06\,\hat\sigma\,n^{-1/5}$ \cite[Sec.~6.2]{silverman1986,scott2015}.

\section{\texttt{KDE-AIS:} Kernel density estimation adaptive importance sampling}
\label{sec:method}
We now present our method, which combines adaptive surrogate modeling with GPs, density estimation with KDEs, and multiple importance sampling.

\subsection{Proposal density $q$ estimation}
\label{sec:q_estimation}
The first step is the surrogate-based IS density estimation. Given data $\mathcal D_n=\{(\x_i,y_i)\}_{i=1}^n$ with $y_i=g(\x_i)$, we fit a GP and denote its posterior mean and variance as $\mu_n,\ \sigma_n^2$. The \emph{surrogate probability of failure} at $\x$ is then given by
\[
\pi_n(\x)\;=\;\Pr\!\big(Y>t\mid\mathcal D_n\big)
\;=\;1-\Phi\!\Big(\tfrac{t-\mu_n(\x)}{\sigma_n(\x)}\Big),
\]
which follows from the Gaussianity of $Y$. We use $\pi$ to guide an ``evolving'' target density. Recall that the optimal (that is, zero-variance) proposal for importance sampling is given as $q^\star(\x)\propto p(\x)\mathbbm 1_F(\x)$. We argue that using $\pi_n$ as a plug-in replacement for $\mathbbm 1_F(\x)$ is quite appropriate because, as we show later, $\lim_{n \rightarrow \infty} \pi_n(\x) = \mathbbm 1_{F}(\x)$. However, instead of setting the target as $\propto p(\x) \pi_n(\x)$, we propose a ``smoothed'' proxy target defined as
\[
q_n^\dagger(\x) \;\propto\; p(\x)\,\Big[\pi_n(\x)\Big]^\alpha,\qquad \alpha\in(0,1].
\]
Note that when $\alpha = 0$, we recover the standard MC estimate. This smoothing is done for the following reasons. First, when $\alpha=1$, we place complete belief on the surrogate estimate of the failure region, which could lead to erroneous estimates during early stages when the surrogate is expected to be biased. $\alpha<1$, on the other hand, guards against surrogate errors and promotes exploration early on. Ideally, we want to explore when the surrogate is less confident and exploit when the surrogate is more confident. Second, the importance weights $p(\x)/q_n^\dagger(\x) \propto \pi_n(\x)^{-\alpha}$ -- therefore, $\alpha=1$ could blow up these weights when $\pi_n(\x) \approx 0$ and $\alpha<1$ guards against that. Finally, we show later in Theorem 1 that regardless of the choice of $\alpha \in (0,1]$, our target density is still consistent.

We estimate the target density $q_n^\dagger$ via KDE.  For a set of draws $\{X_j\}_{j=1}^m\stackrel{{\rm iid}}{\sim}p$, and given $\pi_n$, define weights $w_j=\big[\pi_n(u_j)\big]^\alpha$ and normalize as
$\tilde w_j=w_j/(\sum_{k=1}^m w_k)$, $\forall\, j = 1,\ldots,m$. For bandwidth $h>0$, form the weighted Gaussian KDE $\widehat{q}_n$ as (we illustrate in $1$D for simplicity)
\[
\widehat q_n(\x)\;=\;\sum_{j=1}^m \tilde w_j\,\varphi_h\big(\x-X_j\big),
\qquad
\varphi_h(z)=\frac{1}{(2\pi h^2)^{d/2}}\exp\!\Big(-\tfrac{\|z\|^2}{2h^2}\Big),
\]
where $\varphi$ is the Gaussian kernel with bandwidth $h$. Note that $\widehat{q}_n$ approximates $q_n^\dagger$ -- we show (in Theorem 1) that the associated approximation error is bounded for finite $n$ and vanishes as $n \rightarrow \infty$. One potential issue with $\widehat{q}_n$ is that it still depends on the accuracy of the surrogate estimate of failure regions. There is a nontrivial chance that a failure region, initially missed by the surrogate, can go undetected in the limit. To circumvent this pathology, we introduce an exploration parameter $\eta \in (0,1)$ which combines the KDE with the input density $p(\x)$ and is given as
\[
q_n(\x)\;=\;(1-\eta_n)\,\widehat q_n(\x)\;+\;\eta_n\,p(\x),
\]
with $\eta_n\in(0,1)$ decaying slowly to $0$. Under some conditions on $\eta_n$, we show that this guarantees exploration and will result in an asymptotically dense sampling on $\mcl{X}$. 

\subsection{GP and failure probability updates}
\label{sec:gp_fp_updates}
After iteration $n$, unlike Bayesian decision-theoretic active learning with GPs, a batch of $N_b$ new acquisitions are directly sampled from $q_n$:
\[\x_{n+1:N_b} \sim q_n(\x).\]
That is, our acquisition does not depend on solving another ``inner'' optimization problem typical of Bayesian decision-theoretic approaches, but directly samples from the current proposal $q_n$. Sampling from $q_n$ is straightforward and involves two steps. For every new sample, first draw from a Bernoulli distribution with probability $1-\eta_n$: $B\sim{\rm Bernoulli}(1-\eta_n)$. If $B=1$, then we sample from the KDE branch ($\widehat q_n$); if $B=0$, we sample from $p(\x)$. This ensures that (and later proved in \Cref{thm:prop-conv}) our sampling scheme is asymptotically dense, unlike other existing methods such as GPAIS.

\subsubsection{A simple multifidelity estimator}
\label{sec:mfmis}
When aggregating \emph{all} evaluations collected up to iteration $n$, we form a MIS estimator via the \emph{balance heuristic}.  Let $N_{\rm tot}=N_0+\sum_{k=1}^{n}N_k$ be the total number of evaluations of $g$ so far, and $q_k$ the proposal used at iteration $k$ (with $q_{0}\equiv p$ for the $N_0$ initial seed points).  Define the empirical mixture density
\[
    \bar q_{N_{\rm tot}}(\x)\;=\;\frac{N_0\,p(\x)\;+\;\sum_{k=0}^{n} N_k\,q_k(\x)}{N_{\rm tot}}.
\]
Then, the MIS estimate of the failure probability is
\[
\widehat P_{F, n}^{\,\rm MIS}
\;=\;\frac{1}{N_{\rm tot}}\sum_{i=1}^{N_{\rm tot}}
\mathbbm 1_{\{g(\x_i)>t\}}\;\frac{p(\x_i)}{\,\bar q_{N_{\rm tot}}(\x_i)\,},
\]
which is unbiased and typically exhibits reduced variance relative to weighting only by the proposal that generated each $\x_i$~\cite{dalbey2014gpais}. However, we are interested in accurately estimating $P_F$ with as few as $100$'s of evaluations of $g$. This can be challenging (as revealed by our experiments) since biases in the surrogate can, in turn, bias $\bar q_{N_{\rm tot}}$, leading to inaccurate $P_{F, n}^{\,\rm MIS}$.

To overcome this, we introduce a simple multifidelity estimator. At step $n$, let $\widehat g_n(\mathbf{x})$ denote the surrogate built from the
expensive evaluations collected up to that stage. Using the identity
\[
\mathbbm{1}_{\{g(\mathbf{x})>t\}}
=
\mathbbm{1}_{\{\widehat g_n(\mathbf{x})>t\}}
+
\Big(
\mathbbm{1}_{\{g(\mathbf{x})>t\}}
-
\mathbbm{1}_{\{\widehat g_n(\mathbf{x})>t\}}
\Big),
\]
the failure probability $P_F$ admits the exact decomposition
\[
P_F
=
\mathbb E_{\bar q}\!\left[
\mathbbm{1}_{\{\widehat g_n(\mathbf{x})>t\}}
\frac{p(\mathbf{x})}{\bar q(\mathbf{x})}
\right]
+
\mathbb E_{\bar q}\!\left[
\Big(
\mathbbm{1}_{\{g(\mathbf{x})>t\}}
-
\mathbbm{1}_{\{\widehat g_n(\mathbf{x})>t\}}
\Big)
\frac{p(\mathbf{x})}{\bar q(\mathbf{x})}
\right].
\]
Accordingly, we define the multifidelity MIS estimator
\[
\widehat P_{F,n}^{\,\mathrm{MF\text{-}MIS}}
=
\widehat P_{F,n}^{\,\mathrm{sur\text{-}MIS}}
+
\widehat P_{F,n}^{\,\mathrm{corr\text{-}MIS}},
\]
where
\[
\widehat P_{F,n}^{\,\mathrm{sur\text{-}MIS}}
=
\frac{1}{M_{\mathrm{tot}}}\sum_{i=1}^{M_{\mathrm{tot}}}
\mathbbm{1}_{\{\widehat g_n(\widetilde{\mathbf{x}}_i)>t\}}
\frac{p(\widetilde{\mathbf{x}}_i)}{\bar q_{M_{\mathrm{tot}}}(\widetilde{\mathbf{x}}_i)},
\]
and
\[
\widehat P_{F,n}^{\,\mathrm{corr\text{-}MIS}}
=
\frac{1}{N_{\mathrm{tot}}}\sum_{i=1}^{N_{\mathrm{tot}}}
\left[
\mathbbm{1}_{\{g(\mathbf{x}_i)>t\}}
-
\mathbbm{1}_{\{\widehat g_n(\mathbf{x}_i)>t\}}
\right]
\frac{p(\mathbf{x}_i)}{\bar q_{N_{\mathrm{tot}}}(\mathbf{x}_i)}.
\]
Here, $\{\widetilde{\mathbf{x}}_i\}_{i=1}^{M_{\mathrm{tot}}}$ denotes a large
surrogate-only MIS sample, while $\{\mathbf{x}_i\}_{i=1}^{N_{\mathrm{tot}}}$
are the expensive oracle evaluations available up to the current step $n$.
We set $M_{\mathrm{tot}} \gg N_{\mathrm{tot}}$, which is affordable because $\widehat P_{F,n}^{\,\mathrm{sur\text{-}MIS}}$ is independent of any oracle evaluations and hence is inexpensive to compute.

If the surrogate-only sample is generated using the same MIS mixture proportions
as the expensive sample, then
\[
\bar q_{M_{\mathrm{tot}}}(\mathbf{x}) = \bar q_{N_{\mathrm{tot}}}(\mathbf{x}),
\]
and the estimator simplifies to
\[
\widehat P_{F,n}^{\,\mathrm{MF\text{-}MIS}}
=
\underbrace{
\frac{1}{M_{\mathrm{tot}}}\sum_{i=1}^{M_{\mathrm{tot}}}
\mathbbm{1}_{\{\widehat g_n(\widetilde{\mathbf{x}}_i)>t\}}
\frac{p(\widetilde{\mathbf{x}}_i)}{\bar q_{N_{\mathrm{tot}}}(\widetilde{\mathbf{x}}_i)}
}_{\text{surrogate evaluations}}
+
\underbrace{
\frac{1}{N_{\mathrm{tot}}}\sum_{i=1}^{N_{\mathrm{tot}}}
\left[
\mathbbm{1}_{\{g(\mathbf{x}_i)>t\}}
-
\mathbbm{1}_{\{\widehat g_n(\mathbf{x}_i)>t\}}
\right]
\frac{p(\mathbf{x}_i)}{\bar q_{N_{\mathrm{tot}}}(\mathbf{x}_i)}.
}_{\text{oracle evaluations}}
\]
The first term is a cheap MIS estimate of the surrogate failure probability,
while the second term is a residual correction that removes the surrogate bias
using only the expensive oracle evaluations accumulated up to step $n$. Due to the unbiasedness of the MIS estimator, $\widehat P_{F,n}^{\,\mathrm{MF\text{-}MIS}}$ is also unbiased. The $P_{F,n}^{\,\mathrm{MF\text{-}MIS}}$ estimator is guaranteed to have a lower variance than the conventional MIS estimator, as long as $M_\T{tot} > N_\T{tot}$, and the surrogate and residual evaluation parts are not too negatively correlated. We formalize this in \Cref{lem:mfmis-var-comparison}.

\begin{lemma}[Conditions for variance reduction in MF-MIS]
\label{lem:mfmis-var-comparison}
Let $S_n$ and $R_n$ denote the surrogate and residual (oracle evaluations) contributions of $\widehat P_{F, n}^\T{MF-MIS}$, respectively. Let 
$V_{S,n}
:=
\operatorname{Var}\!\left(
\mathbbm{1}_{\{\widehat g_n(\mathbf{x})>t\}}
\frac{p(\mathbf{x})}{\bar q_{N_{\mathrm{tot}}}(\mathbf{x})}
\right)$ and 
$C_n
:=
\operatorname{Cov}\!\left(
\mathbbm{1}_{\{\widehat g_n(\mathbf{x})>t\}}
\frac{p(\mathbf{x})}{\bar q_{N_{\mathrm{tot}}}(\mathbf{x})},
\left[
\mathbbm{1}_{\{ g_n(\mathbf{x})>t\}}
-
\mathbbm{1}_{\{\widehat g_n(\mathbf{x})>t\}}
\right]
\frac{p(\mathbf{x})}{\bar q_{N_{\mathrm{tot}}}(\mathbf{x})}
\right).$
Then, 
\[
\operatorname{Var}\!\left(\widehat P_{F,n}^{\mathrm{MIS}} \right)
-
\operatorname{Var}\!\left(\widehat P_{F,n}^{\mathrm{MF\text{-}MIS}} \right)
=
\left(\frac{1}{N_{\mathrm{tot}}}-\frac{1}{M_{\mathrm{tot}}}\right)V_{S,n}
+\frac{2}{N_{\mathrm{tot}}}C_n.
\]
Consequently, if
\[
C_n \ge -\frac12\left(1-\frac{N_{\mathrm{tot}}}{M_{\mathrm{tot}}}\right)V_{S,n},
\]
then
\[
\operatorname{Var}\!\left(\widehat P_{F,n}^{\mathrm{MF\text{-}MIS}} \right)
\le
\operatorname{Var}\!\left(\widehat P_{F,n}^{\mathrm{MIS}} \right).
\]
\end{lemma}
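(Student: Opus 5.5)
We will prove the identity by a direct variance computation, treating $\bar q_{N_{\mathrm{tot}}}$ and $\widehat g_n$ as fixed (i.e., conditioning on everything that determines them) and regarding the samples $\mathbf{x}_1,\dots,\mathbf{x}_{N_{\mathrm{tot}}}$ and $\widetilde{\mathbf{x}}_1,\dots,\widetilde{\mathbf{x}}_{M_{\mathrm{tot}}}$ as i.i.d.\ draws from the common mixture $\bar q := \bar q_{N_{\mathrm{tot}}}=\bar q_{M_{\mathrm{tot}}}$. The two samples are taken to be independent of each other. Introduce the per-sample random variables
\[
A(\mathbf{x}) = \mathbbm{1}_{\{\widehat g_n(\mathbf{x})>t\}}\frac{p(\mathbf{x})}{\bar q(\mathbf{x})},\qquad
B(\mathbf{x}) = \Big[\mathbbm{1}_{\{g(\mathbf{x})>t\}}-\mathbbm{1}_{\{\widehat g_n(\mathbf{x})>t\}}\Big]\frac{p(\mathbf{x})}{\bar q(\mathbf{x})},
\]
so that $V_{S,n}=\operatorname{Var}(A)$, $C_n=\operatorname{Cov}(A,B)$, and the plain MIS summand is exactly $A+B$. (We read the $g_n$ in the definition of $C_n$ as the oracle $g$.)

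First, since $\widehat P_{F,n}^{\mathrm{MIS}}=\frac{1}{N_{\mathrm{tot}}}\sum_i (A+B)(\mathbf{x}_i)$ is an average of i.i.d.\ terms, its variance is
\[
\frac{1}{N_{\mathrm{tot}}}\operatorname{Var}(A+B)=\frac{1}{N_{\mathrm{tot}}}\big(V_{S,n}+\operatorname{Var}(B)+2C_n\big).
\]
Second, write $\widehat P_{F,n}^{\mathrm{MF\text{-}MIS}}=S_n+R_n$, where $S_n=\frac{1}{M_{\mathrm{tot}}}\sum_i A(\widetilde{\mathbf{x}}_i)$ and $R_n=\frac{1}{N_{\mathrm{tot}}}\sum_i B(\mathbf{x}_i)$. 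Because the surrogate-only sample is drawn independently of the oracle sample, $\operatorname{Cov}(S_n,R_n)=0$, and therefore
\[
\operatorname{Var}(\widehat P_{F,n}^{\mathrm{MF\text{-}MIS}})=\frac{V_{S,n}}{M_{\mathrm{tot}}}+\frac{\operatorname{Var}(B)}{N_{\mathrm{tot}}}.
\]
Subtracting the two expressions cancels the $\operatorname{Var}(B)$ terms and yields exactly the claimed identity $\big(\tfrac{1}{N_{\mathrm{tot}}}-\tfrac{1}{M_{\mathrm{tot}}}\big)V_{S,n}+\tfrac{2}{N_{\mathrm{tot}}}C_n$.

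For the consequence, we multiply the identity by $N_{\mathrm{tot}}/2>0$. The difference is then nonnegative if and only if $C_n\ge -\tfrac12\big(1-N_{\mathrm{tot}}/M_{\mathrm{tot}}\big)V_{S,n}$, which gives the stated variance inequality. When $M_{\mathrm{tot}}>N_{\mathrm{tot}}$ the right-hand side is nonpositive, so the condition holds in particular for any $C_n\ge0$.

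The main obstacle is justifying the modelling assumptions rather than the algebra. In the actual algorithm the $\mathbf{x}_i$ come from different proposals $q_k$, and both $q_k$ and $\widehat g_n$ depend on earlier oracle data, so the summands are neither identically distributed nor independent. I would handle this in the standard deterministic-mixture way. First condition on the proposals and on $\widehat g_n$. Then either interpret the samples as i.i.d.\ draws from $\bar q$, which is the "same mixture proportions" setting stated just before the lemma, or note that with stratified draws the variance becomes $\frac{1}{N_{\mathrm{tot}}^2}\sum_k N_k\operatorname{Var}_{q_k}(\cdot)$. In that second case $V_{S,n}$ and $C_n$ must be read as the corresponding pooled within-component quantities, and the same cancellation goes through. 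I would also state explicitly that the surrogate sample is drawn independently of the oracle sample, since that independence is what removes the cross term.
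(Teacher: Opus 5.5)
Your proof is correct and follows the paper's own argument. The paper likewise writes the MIS estimator as $\widetilde S_n + R_n$ (your $A+B$ averaged over the oracle sample) and uses independence of $S_n$ and $R_n$, so that $\operatorname{Var}(R_n)$ cancels on subtraction. You go further by spelling out the idealization the paper compresses into ``by construction'': $\widehat g_n$ and $\bar q_{N_{\mathrm{tot}}}$ held fixed, i.i.d.\ draws from the mixture or pooled within-component variances, a surrogate sample independent of the oracle sample, and $g_n$ read as $g$.
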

\begin{proof}
    See \Cref{sec:mfmis_var_proof}.
\end{proof}
\subsection{Choosing parameters}
There are several parameters that need to be specified in our methodology, including $h$ (kernel bandwidth), $\alpha$ (smoothing exponent), and $\eta_n$ (exploration parameter); we now provide some guidelines for choosing them.
The bandwidth parameter of the KDE is chosen according to Silverman's rule of thumb~\cite{silverman1986}. 
Theoretically, the choice of the ``smoothing exponent'' $\alpha$ is insignificant; in practice, we recommend a default value of $\alpha = 0.97$ which worked well for all of our experiments.

 A critical choice is the exploration schedule $\eta_n$. We need $\sum_n\eta_n=\infty$ to ensure $p$ is sampled infinitely often -- this ensures a dense sampling in $\mcl{X}$ asymptotically and avoids pathologies like missing a failure region. We also need $\lim_{n \to 0} \eta_n = 0$ to ensure the density $q_n$ is asymptotically consistent -- this requires annealing $\eta_n$ to $0$.
Thus, we set the exploration schedule as follows:
\[\eta_n = \min \big\{1,\,c\,n^{-\gamma}\big\},\qquad 0<\gamma<1.\]
Since $\gamma$ is nonnegative, this sequence converges to $0$; further, $\sum_n n^{-\gamma} = \infty$ (since $\gamma < 1$).
The constant $c$ impacts convergence and other theoretical guarantees in this approach and thus must be chosen to keep the error, due to the KDE ($q_n$) and the surrogate failure probability ($\pi_n$), under control. In other words, $c$ must dominate the maximum of the error due to the KDE and the error due to the surrogate. In practice, we set $c=0.3$, which worked well for all the experiments conducted in this manuscript. However, the theoretical requirements behind the choice of $c$ are governed by the rate of decay of error in approximating $q^\star$ with $\widehat q_n$ -- this is discussed next.

The KDE error is due to two factors: the stochasticity in the samples used to fit the density and a bias term that stems from the KDE's modeling inadequacies, which are given as ~\cite{silverman1986}
\[ \|\widehat{q}_n - q^\dagger_n\| \approx \underset{\T{stochastic}}{\sqrt{\frac{\log m}{m h^d}}} + \underset{\T{bias}}{h^\beta}.\]
The surrogate error is the error in approximating the failure probability in $\mcl{X}$ -- this is defined as:
\[r_n = \| \pi_n(\x) - \mathbbm 1_{g(\x) > t}\|_{L^1(p)}.\]
Overall, we choose $c$ to ensure $\eta_n \gg \max(\|\widehat{q}_n - q^\dagger_n\|, r_n)$ because we want our exploration weight to decay slower than the errors in the KDE and surrogate, failing which we might end up not exploring $\mcl{X}$ while the surrogate is still not accurate enough. The natural question then is how can we estimate the surrogate error $r_n$, since the true indicator function is unknown. The following result provides an unbiased estimator $\widehat r_n$ of $r_n$ which can be estimated with the data $\D_n$ available at the current iteration.

\begin{proposition}[Unbiased estimator for $r_n$]
    \label{prop:estimator_for_rn}
    Recall that \(F=\{\x \in \mcl{X}:g(\x)>t\}\).
Then, the surrogate error is quantified as
\[
r_n \;=\; \mathbb E_p\!\big[\,|\pi_n(\x) -\mathbbm 1_F(\x)|\,\big]
      \;=\; \int_{\mathcal X} |\pi(\x)-\mathbbm 1_F(\x)|\,p(\x)\,d\x.
\]
And, an unbiased estimator for $r_n$ is given as
\[
\widehat r_n
\;=\;
\widehat P_F
\;+\;
\widehat{\mathbb{E}_p[\pi_n]}
\;-\;
2\,\widehat{\mathbb{E}_p[\pi_n\,\mathbbm{1}_F]},
\]
which can be estimated with no additional cost of evaluating the expensive limit state $g$ used to fit the GP surrogate.
\end{proposition}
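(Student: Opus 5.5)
The key fact is that $\mathbbm 1_F$ takes only the values $0$ and $1$, while $\pi_n(\x)\in[0,1]$. This lets us remove the absolute value pointwise and write $r_n$ as a linear combination of three expectations under $p$. Each of these has an unbiased estimator built from quantities the algorithm already computes.

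\textbf{Pointwise identity.} I will first show that for every $\x\in\mcl X$,
\[
|\pi_n(\x)-\mathbbm 1_F(\x)| \;=\; \mathbbm 1_F(\x)+\pi_n(\x)-2\,\pi_n(\x)\mathbbm 1_F(\x).
\]
I will check it by cases. If $\x\in F$, the left side is $1-\pi_n(\x)$, and the right side is $1+\pi_n-2\pi_n=1-\pi_n$. If $\x\notin F$, both sides equal $\pi_n(\x)$. The case split relies on $\pi_n\le 1$ and $\mathbbm 1_F^2=\mathbbm 1_F$. Also $\pi_n$ is measurable, being a continuous function of $\mu_n$ and $\sigma_n$.

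\textbf{Integrating.} Integrating the identity against $p$ over $\mcl X$, which is allowed since all terms are bounded, gives
\[
r_n = P_F + \mathbb E_p[\pi_n] - 2\,\mathbb E_p[\pi_n\mathbbm 1_F].
\]

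\textbf{Unbiasedness.} Expectation is linear, so replacing each term by an unbiased estimator gives an unbiased $\widehat r_n$. The three estimators are as follows.
\begin{itemize}
\item For $\widehat P_F$, take the MIS estimator, which is unbiased under the balance heuristic (Section~\ref{sec:mfmis}).
\item For $\widehat{\mathbb E_p[\pi_n]}$, take either a plain Monte Carlo average of $\pi_n$ over cheap draws from $p$, or the IS average $\frac1M\sum_i\pi_n(\tilde\x_i)p(\tilde\x_i)/\bar q(\tilde\x_i)$. Neither needs oracle calls, because $\pi_n$ comes from the GP posterior alone.
\item For $\widehat{\mathbb E_p[\pi_n\mathbbm 1_F]}$, take the same MIS average as $\widehat P_F$, with each summand multiplied by $\pi_n(\x_i)$. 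It uses only the oracle values $g(\x_i)$ already in $\D_n$.
\end{itemize}
Each of these is the balance-heuristic estimator of an integral of a bounded function against $p$, so each is unbiased as long as $\bar q>0$ wherever $p>0$. That positivity holds because the proposals contain the $\eta_n p$ component and the seed density is $q_0=p$.

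\textbf{Expected obstacle.} The subtle step is conditioning. $\pi_n$ depends on $\D_n$, and the points $\x_i$ used in the last two estimators are in $\D_n$, so their law and $\pi_n$ are not independent. I will therefore state unbiasedness conditionally on the current surrogate, that is, for a fixed $\pi_n$ with the evaluation points treated as drawn from $\bar q$ as in the balance-heuristic argument. If needed, I will note that using a held-out batch, such as the newest batch drawn from $q_n$ after $\pi_n$ is fixed, makes this conditional unbiasedness exact. The ``no additional cost'' claim then follows because every term uses either free surrogate evaluations or oracle values already collected.
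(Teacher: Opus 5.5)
Your route is the same as the paper's. The paper writes the pointwise identity as $|\pi_n-\mathbbm 1_F|=\mathbbm 1_F(1-\pi_n)+(1-\mathbbm 1_F)\pi_n$, integrates against $p$, and expands to $P_F+\mathbb E_p[\pi_n]-2\,\mathbb E_p[\pi_n\mathbbm 1_F]$. It then plugs in balance-heuristic averages over the $N_{\rm tot}$ oracle points with weights $\omega_i=p(\x_i)/\bar q_{N_{\rm tot}}(\x_i)$ for all three terms; you additionally allow fresh cheap draws for $\mathbb E_p[\pi_n]$. The paper never discusses how $\pi_n$ depends on the data, so your ``expected obstacle'' is a real point that goes beyond it. However, your primary resolution of it does not work.

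Stating unbiasedness ``for a fixed $\pi_n$ with the evaluation points treated as drawn from $\bar q$'' is not a legitimate conditioning. $\pi_n$ is a function of $\D_n$, so once $\pi_n$ is fixed, the points that fitted it are no longer draws from $\bar q$ independent of it. Here the failure is exact, not a small bias:
\begin{itemize}
\item The paper's GP has no nugget, so at every training point $\sigma_n(\x_i)=0$ and $\mu_n(\x_i)=g(\x_i)$. Hence $\pi_n(\x_i)=\mathbbm 1_F(\x_i)$ (off the null event $g(\x_i)=t$), and each training point contributes $\omega_i|\pi_n(\x_i)-\mathbbm 1_F(\x_i)|=0$.
\item If all three terms are computed on those points, then $\widehat r_n\equiv 0$, which cannot be unbiased when $r_n>0$.
\item If you take $\widehat{\mathbb E_p[\pi_n]}$ from fresh draws but the other two terms from those points, then $\widehat{\mathbb E_p[\pi_n\mathbbm 1_F]}=\widehat P_F$. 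So $\widehat r_n=\widehat{\mathbb E_p[\pi_n]}-\widehat P_F$, whose mean is the signed error $\mathbb E\int(\pi_n-\mathbbm 1_F)\,p$, not $\mathbb E[r_n]$.
\end{itemize}

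So your held-out batch is not optional; it is the actual proof. The $\pi_n$-dependent terms must be computed on points drawn after $\pi_n$ is fixed, e.g.\ the newest batch $\x^{(1)},\dots,\x^{(N_b)}\sim q_n$. The weights must match that batch's law: use $p/q_n$ averaged over $N_b$, not $p/\bar q_{N_{\rm tot}}$ divided by $N_{\rm tot}$, which is biased for a sub-sample drawn from $q_n$ alone. Computing all three terms on that batch gives
$\widehat r_n=\frac{1}{N_b}\sum_{k}\frac{p(\x^{(k)})}{q_n(\x^{(k)})}\,|\mathbbm 1_F(\x^{(k)})-\pi_n(\x^{(k)})|$,
and then $\mathbb E[\widehat r_n\mid\D_n]=r_n$ exactly, since $q_n\ge\eta_n p>0$. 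If you instead keep $\widehat P_F$ on all data, you get only $\mathbb E[\widehat r_n]=\mathbb E[r_n]$, granting the paper's MIS unbiasedness. The ``no additional cost'' claim survives because that batch is evaluated by the oracle anyway.
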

\begin{proof}
    See \Cref{sec:estimator_for_rn}.
\end{proof}
The overall methodology is summarized in \Cref{alg:kde_mis_gp} - \Cref{alg:mis}. 
\begin{algorithm}[H]
\caption{\texttt{KDE-AIS:} Kernel density estimation adaptive importance sampling.}
\label{alg:kde_mis_gp}
\begin{algorithmic}[1]
\Require Input density $p(\mathbf{x})$ with a sampler; threshold $t$; pilot size $m$; initial size $n_0$; batch size $q$; iterations $T$; bandwidth $h>0$; exponent $\alpha\!\in\!(0,1]$; exploration schedule $\eta_n$ (e.g.\ $\eta_n=\min\{1,c\,n^{-\gamma}\}$); batch size $N_b$.
\State \textbf{Pilot:} draw $\{\mathbf{u}_j\}_{j=1}^m \stackrel{\text{iid}}{\sim} p$.
\State \textbf{Initialize data:} draw $\{ \mathbf{x}_i\}_{i=1}^{N_0}\!\sim\! p$, set $y_i=g(\mathbf{x}_i)$ and $\mathcal D_0=\{(\mathbf{x}_i,y_i)\}_{i=1}^{N_0}$.
\State Set $\text{counts}\leftarrow [\,N_0\,]$, $\text{proposals}\leftarrow[\text{``}p\text{''}]$, $N_{\text{tot}}\leftarrow N_0$.
\For{$n=0,1,\dots,T-1$}
  \State \textbf{Fit GP:} $\mathsf{GPfit}(\mathcal D_n)\ \to\ (\mu_n,\sigma_n^2)$.
  \State \textbf{Compute soft failure:} $\pi_n(\mathbf{u}_j)=1-\Phi\!\big((t-\mu_n(\mathbf{u}_j))/\sigma_n(\mathbf{u}_j)\big)$ for $j=1{:}m$.
  \State \textbf{KDE weights:} $w_{n,j} \leftarrow \pi_n(\mathbf{u}_j)^\alpha$, \ \ $\tilde w_{n,j} \leftarrow w_{n,j}/\sum_{k=1}^m w_{n,k}$.
  \State \textbf{Exploration schedule:} $\eta \leftarrow \eta_{n+1}$.
  \State \textbf{Draw batch from mixture $\widehat q_n$:} $\{\mathbf{x}^{(k)}\}_{k=1}^{N_b} \leftarrow \mathsf{SampleMixture}(\eta,\tilde{\mathbf{w}}_n,\widehat q_n)$.
  \State \textbf{Evaluate:} $y^{(k)} \leftarrow g(\mathbf{x}^{(k)})$, \quad $\mathcal D_{n+1} \leftarrow \mathcal D_n \cup \{(\mathbf{x}^{(k)},y^{(k)})\}_{k=1}^{N_b}$.
  \State \textbf{Bookkeeping for MIS:} append ``$q_n$'' to $\text{props}$, append $q$ to $\text{counts}$, and set $N_{\text{tot}}\leftarrow N_{\text{tot}}+q$.
  \State {\bf Online estimate:} $\widehat P_{F,n}\leftarrow \mathsf{MISEstimator}(\mathcal D_{n+1},\text{proposals},\text{counts})$.
\EndFor
\State \Return final GP, $\mathcal D_T$, and $\widehat P_{F,T}$.
\end{algorithmic}
\end{algorithm}

\vspace{0.75em}
\begin{algorithm}[H]
\caption{$\mathsf{SampleMixture}(\eta,\tilde{\mathbf{w}},\widehat q_n)$}
\label{alg:sample_mixture}
\begin{algorithmic}[1]
\Require $\eta\in(0,1)$, normalized weights $\tilde{\mathbf{w}}=(\tilde w_j)_{j=1}^m$.
\State For each new point independently:
\State \quad Draw $B\sim \mathrm{Bernoulli}(1-\eta)$.
\State \quad \textbf{if} $B=1$ (KDE branch): draw $\x \sim \widehat q_n$.
\State \quad \textbf{else} (exploration branch): draw $\mathbf{x}\sim p$.
\State \Return the collected batch $\{\mathbf{x}\}$.
\end{algorithmic}
\end{algorithm}

\begin{algorithm}[H]
\caption{$\mathsf{MISEstimator}(\mathcal D,\text{props},\text{counts})$}
\label{alg:mis}
\begin{algorithmic}[1]
\Require $\mathcal D=\{(\mathbf{x}_i,y_i)\}_{i=1}^{N_{\text{tot}}}$; a list \text{props} of proposals used: first entry ``$p$'' (for the $n_0$ initial points), then $\widehat q_1, \widehat q_2, \dots$; matching counts in \text{counts}.
\State Compute the empirical mixture density
\[
\bar{q}_{N_{\text{tot}}}(\mathbf{x}) \;=\; \frac{\text{counts}[1]\cdot p(\mathbf{x})\;+\;\sum_{k\ge 2}\text{counts}[k]\cdot \widehat q_{k-1}(\mathbf{x})}{N_{\text{tot}}}.
\]
\State Estimate failure probability via $\widehat P_F^{\T{MIS}}$ or $\widehat P_F^{\T{MF-MIS}}$.
\State \Return $\widehat P_F$.
\end{algorithmic}
\end{algorithm}

\section{Theoretical properties}
\label{sec:theory}
The main theoretical result of \texttt{KDE-AIS} is to show that our surrogate estimate $q_n$ converges to the optimal $q^\star$ in total variation. This automatically guarantees asymptotic convergence of the proposed MF-MIS estimator to $\widehat P_F$, and the asymptotic vanishing of the estimator variance to zero (via \Cref{prop:optimalq}). Our results depend on the following mild assumptions listed below. 
\begin{assumption}[Input density]\label{as:p}
$p$ is bounded and bounded away from zero on a compact set containing the support of $q_n^\dagger$; moreover $p$ is $\beta$-H\"older, $\beta>0$.
\end{assumption}

\begin{assumption}[Surrogate accuracy]\label{as:pi}
$\|\pi_n-\mathbbm 1_F\|_{L^1(p)} \to 0$ as $n\to\infty$; denote $r_n=\|\pi_n-\mathbbm 1_F\|_{L^1(p)}$.
Consequently, $q_n^\dagger \to q^\star$ in total variation, where $q^\star(\x)\propto p(\x)\,\mathbbm 1_F(\x)$ is the zero-variance IS proposal.
\end{assumption}
Note that, under mild regularity conditions on the GP kernel, it can be shown that the GP posterior mean $\mu_n$ asymptotically converges to $g$. This has been proven previously; see, e.g., Theorem 1 in \cite{renganathan2023camera}.

\begin{assumption}[Bandwidth and sample size]\label{as:bandwidth}
As $m \rightarrow \infty$, $h_m\to 0$ and $m h^d/\log m \to \infty$.
\end{assumption}

\begin{assumption}[Weighted KDE regularity]\label{as:weightedKDE}
$K$ is bounded and Lipschitz, and the weights satisfy $0\le w_i\le 1$. Then, the weighted KDE inherits the uniform consistency rates of the standard KDE.
\end{assumption}
Assumption 4 is the natural analog of standard results on kernel density estimators with probability weights; see, for example, \cite{BuskirkLohr2005} and \cite{Breunig2008}.

\begin{assumption}[Exploration schedule]\label{as:eps}
$\sum \eta_n = \infty$ and $
\eta_n \to 0$. 
\end{assumption}
Note that, we want $\eta_n \to 0$ to ensure once we have a good enough surrogate for $q^\star$, we want to only sample from that. However, $ \sum_n \eta_n = \infty$ ensures that $\mcl{X}$ is sampled infinitely often, thereby avoiding pathologies that lead to pockets of $\mcl{X}$ being missed.

We now present the main theoretical result of our work.
\begin{theorem}[Proposal convergence]
\label{thm:prop-conv}
Let $p$ be bounded and bounded away from $0$ on a compact $\mathcal X\subset\mathbb R^d$ and $\beta$-H\"older
(Assumption~1). Let the GP surrogate yield $\pi_n$ with
$r_n=\|\pi_n-\mathbbm 1_{\{g>t\}}\|_{L^1(p)}\to 0$ (Assumption~2).
From pilot samples $\{u_j\}_{j=1}^{m_n}\stackrel{\rm iid}{\sim}p$ and bandwidth $h_n\downarrow0$ with
$m_n h_n^d/\log m_n\to\infty$ (Assumption~3), define the weighted KDE
\[
\widehat q_n(\x)=\sum_{j=1}^{m_n}\tilde w_{n,j}\,\varphi_{h_n}(\x-u_j),
\qquad
\tilde w_{n,j}\propto \big[\pi_n(u_j)\big]^\alpha,\ \ 0<\alpha\le1,
\]
and the surrogate target $q_n^\dagger(\x)\propto p(\x)\,[\pi_n(\x)]^\alpha$.
Assume $0\le \tilde w_{n,j}\le 1$ (Assumption~4). Let the exploration–mixture proposal be
\[
q_n(\x)=(1-\eta_n)\,\widehat q_n(\x)+\eta_n\,p(\x),
\]
with $\eta_n\to0$, $\sum_n \eta_n=\infty$, and
$\eta_n \gg h_n^{\beta}+\sqrt{\log(m_n)/(m_n h_n^d)}\ \vee\ r_n$ (Assumption~5).
Then, as $n\to\infty$ (allowing $m_n\to\infty$),
\begin{align*}
\|\widehat q_n - q_n^\dagger\|_\infty
&= O_p\!\Big(\sqrt{\tfrac{\log(m_n)}{m_n h_n^d}} + h_n^{\beta}\Big),\\
\|q_n - q_n^\dagger\|_{TV} &\xrightarrow{p} 0,\\
\|q_n^\dagger - q^\star\|_{TV} &\le C\, r_n^{\,\alpha}\ \xrightarrow{} 0,
\end{align*}
and hence $\|q_n - q^\star\|_{TV}\to 0$, where $q^\star(\x)\propto p(\x)\,\mathbbm 1_{\{g(\x)>t\}}$.
\end{theorem}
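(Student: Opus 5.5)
The proof splits into three parts, one per displayed claim, combined at the end by the triangle inequality in total variation:
\[
\|q_n-q^\star\|_{TV}\le \|q_n-q_n^\dagger\|_{TV}+\|q_n^\dagger-q^\star\|_{TV}.
\]

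\textbf{Step 1 (KDE error).} We write $q_n^\dagger = p\,\pi_n^\alpha/Z_n$ with $Z_n=\mathbb E_p[\pi_n^\alpha]$. Then $\widehat q_n = \widehat A_n/\widehat Z_n$, where
\[
\widehat A_n(\x)=m_n^{-1}\sum_j \pi_n(u_j)^\alpha\varphi_{h_n}(\x-u_j), \qquad \widehat Z_n=m_n^{-1}\sum_j\pi_n(u_j)^\alpha .
\]
The numerator is an ordinary weighted KDE with weights in $[0,1]$. By Assumption~\ref{as:weightedKDE} and the standard uniform-consistency results (Talagrand/Giné–Guillou type bounds), its stochastic part is $O_p(\sqrt{\log m_n/(m_nh_n^d)})$. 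Its bias $\varphi_{h_n}*(p\pi_n^\alpha)-p\pi_n^\alpha$ is $O(h_n^\beta)$, using the $\beta$-Hölder property of $p$ (Assumption~\ref{as:p}). For this we also need $\pi_n^\alpha$ to be uniformly Hölder; since $\pi_n$ is a smooth GP quantity this is plausible, and it may need to be added as an implicit condition. The denominator satisfies $\widehat Z_n-Z_n=O_p(m_n^{-1/2})$ by the CLT. We then need $Z_n$ bounded away from zero, which follows because $Z_n\to\mathbb E_p[\mathbbm 1_F]=P_F>0$ (see Step 3). A ratio expansion then gives the sup-norm rate. One conditioning subtlety must be handled: $\pi_n$ is random but independent of the pilot sample $\{u_j\}$. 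We therefore argue conditionally on $\mathcal D_n$, where the weights are fixed functions.

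\textbf{Step 2 (mixture error).} We decompose
\[
q_n-q_n^\dagger=(1-\eta_n)(\widehat q_n-q_n^\dagger)+\eta_n(p-q_n^\dagger),
\]
so that $\|q_n-q_n^\dagger\|_{TV}\le \tfrac12\|\widehat q_n-q_n^\dagger\|_{L^1}+\eta_n$. The second term vanishes because $\eta_n\to0$. For the first term, the sup-norm bound from Step 1 converts to $L^1$ on the compact $\mathcal X$ by multiplying by $|\mathcal X|$. The Gaussian-kernel mass that leaks outside $\mathcal X$, a boundary layer of width $O(h_n)$, contributes only $O(h_n)$ mass, which also vanishes. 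The growth condition $\eta_n\gg\cdots$ is not needed for this limit itself; it matters only for the exploration/density claim.

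\textbf{Step 3 (surrogate error).} This is the main obstacle. Let $f_n=\pi_n^\alpha$. Since $\mathbbm 1_F^\alpha=\mathbbm 1_F$ and $|a^\alpha-b^\alpha|\le|a-b|^\alpha$ for $a,b\in[0,1]$, Jensen's inequality (concavity of $s\mapsto s^\alpha$) gives
\[
\|f_n-\mathbbm 1_F\|_{L^1(p)}\le \mathbb E_p|\pi_n-\mathbbm 1_F|^\alpha\le r_n^\alpha .
\]
In particular $|Z_n-P_F|\le r_n^\alpha$. The standard normalized-density estimate then gives
\[
\|q_n^\dagger-q^\star\|_{L^1}=\int\Big|\frac{pf_n}{Z_n}-\frac{p\mathbbm 1_F}{P_F}\Big|\le \frac{\|f_n-\mathbbm 1_F\|_{L^1(p)}+|Z_n-P_F|}{P_F}\le \frac{2r_n^\alpha}{P_F}.
\]
This proves the claim with $C=1/P_F$ in TV, assuming $P_F>0$ (the nondegenerate case). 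The difficulty here is only that exponent bookkeeping; the heavier technical burden is the uniform KDE rate in Step 1.

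The triangle inequality then yields $\|q_n-q^\star\|_{TV}\to0$ in probability.
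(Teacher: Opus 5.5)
Your proposal is essentially the paper's proof with the steps reordered. Your Step~3 is its Step~1: the $\alpha$-H\"older/Jensen bound $\|\pi_n^\alpha-\mathbbm 1_F\|_{L^1(p)}\le r_n^\alpha$ plus a normalization estimate. Your Step~1 is its Step~2: the bias/stochastic split for the weighted KDE, resting on Assumption~4 and on an $n$-uniform H\"older constant for $q_n^\dagger$, which the paper also imposes explicitly. Your Step~2 is its Step~3: the bound $\|q_n-q_n^\dagger\|_{TV}\le\tfrac12\lambda(\mathcal X)\|\widehat q_n-q_n^\dagger\|_\infty+\eta_n$. Your ratio treatment of the self-normalized weights and your constant $C=1/P_F$, which needs no lower bound on $Z_n$, are cleaner than the paper's versions.

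One caveat you share with the paper: an $n$-uniform H\"older constant for $\pi_n^\alpha$ is incompatible with $\pi_n^\alpha\to\mathbbm 1_F$ in $L^1(p)$ for any nontrivial failure set. By Arzel\`a--Ascoli a subsequence would converge uniformly, and its continuous limit cannot equal $\mathbbm 1_F$ a.e. So your remark that this condition is ``plausible'' does not hold. For the TV conclusions it is safer to bound the bias in $L^1$:
\[
\|\varphi_{h_n}*q_n^\dagger-q_n^\dagger\|_{L^1}\le 2\|q_n^\dagger-q^\star\|_{L^1}+\|\varphi_{h_n}*q^\star-q^\star\|_{L^1}\to0 ,
\]
which needs no H\"older assumption. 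The sup-norm rate in the first display, however, genuinely requires extra regularity of $q_n^\dagger$ at scale $h_n$.
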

\begin{proof}
    See \Cref{sec:thm1_proof}.
\end{proof}

\section{Experiments}
\label{sec:experiments}
We benchmark the proposed method on two synthetic and two real-world experiments. In all experiments, we set $m=10^7$ (the pilot sample size drawn from $p$), $\alpha = 0.97$, $h=0.2$, and $c=0.3$. Each experiment is started with a set of $N_0$ seed samples, chosen uniformly at random from $\mcl{X}$, and replicated $10$ times. We compare the evolution of our estimator, $\widehat P_F^\T{MF-MIS}$, against $\widehat P_F^\T{MIS}$ (to demonstrate the benefit of the multi-fidelity estimator) and GPAIS~\cite{dalbey2014gpais}. We also include three additional ``two-stage'' benchmarks, which split the total budget of oracle evaluations in each experiment into two parts: one for surrogate fitting ($\hat g$) and one for $P_F$ estimation. For instance, ``Two-stage (30-70)'' means $30\%$ of all samples were used for surrogate fitting with $70\%$ for $P_F$ estimation.  This competitor emulates the two-stage approach in \cite{peherstorfer2016multifidelity}.
A summary of the experimental setting is shown in \Cref{tab:summary}; details on each experiment are presented in the following sections. 

\begin{table}[htb!]
\centering
\caption{Summary of experimental setting.}
\label{tab:summary}

\begin{tabular}{lcccc}
\toprule
\textbf{Experiment} & 
\textbf{Input dim.} & 
\textbf{Seed points $N_0$} & 
\textbf{Iterations $T$} & 
\textbf{Batch size} \\
\midrule

Herbie ($t=2.0,~2.122$) & $2$ & $5$ & $100$ & $5$ \\
Four branch ($t=2.0, 3.1$) & $2$ & $5$ & $100$ & $5$ \\
Cantilever beam & $4$ & $50$ & $50$ & $5$ \\
Shaft torsion & $5$ & $50$ & $200$ & $5$ \\

\bottomrule
\end{tabular}

\end{table}

\subsection{Synthetic experiments}
\subsubsection{Herbie function}
\label{sec:herbie}
As a first synthetic benchmark, we consider the Herbie test function \cite{lee2011herbie}, which has been used extensively in
reliability studies \cite{dalbey2014gpais,romero2016pof}. For
$\x = (x_1,x_2)^\top \in \mcl{X} = [-2,2]^2$, the limit state function
$g : [-2,2]^2 \to \mathbb{R}$ is defined as
\[
  g(\x)
  \;=\;
  \sum_{d=1}^{2}
  \bigg[
    \exp\!\big( -(x_d - 1)^2 \big)
    +
    \exp\!\big( -0.8 (x_d + 1)^2 \big)
    -
    0.05 \sin\!\big( 8 (x_d + 0.1) \big)
  \bigg].
\]
This function is smooth but highly multi–modal due to the superposition of two Gaussian-like bumps and an oscillatory term in each coordinate, making the resulting failure set disconnected and geometrically intricate.  We set $p$ to be uniformly distributed over $\mathcal{X}$.
\begin{figure}[htb!]
    \centering
    \begin{subfigure}{.5\textwidth}  
    \centering
    \includegraphics[width=1\linewidth]{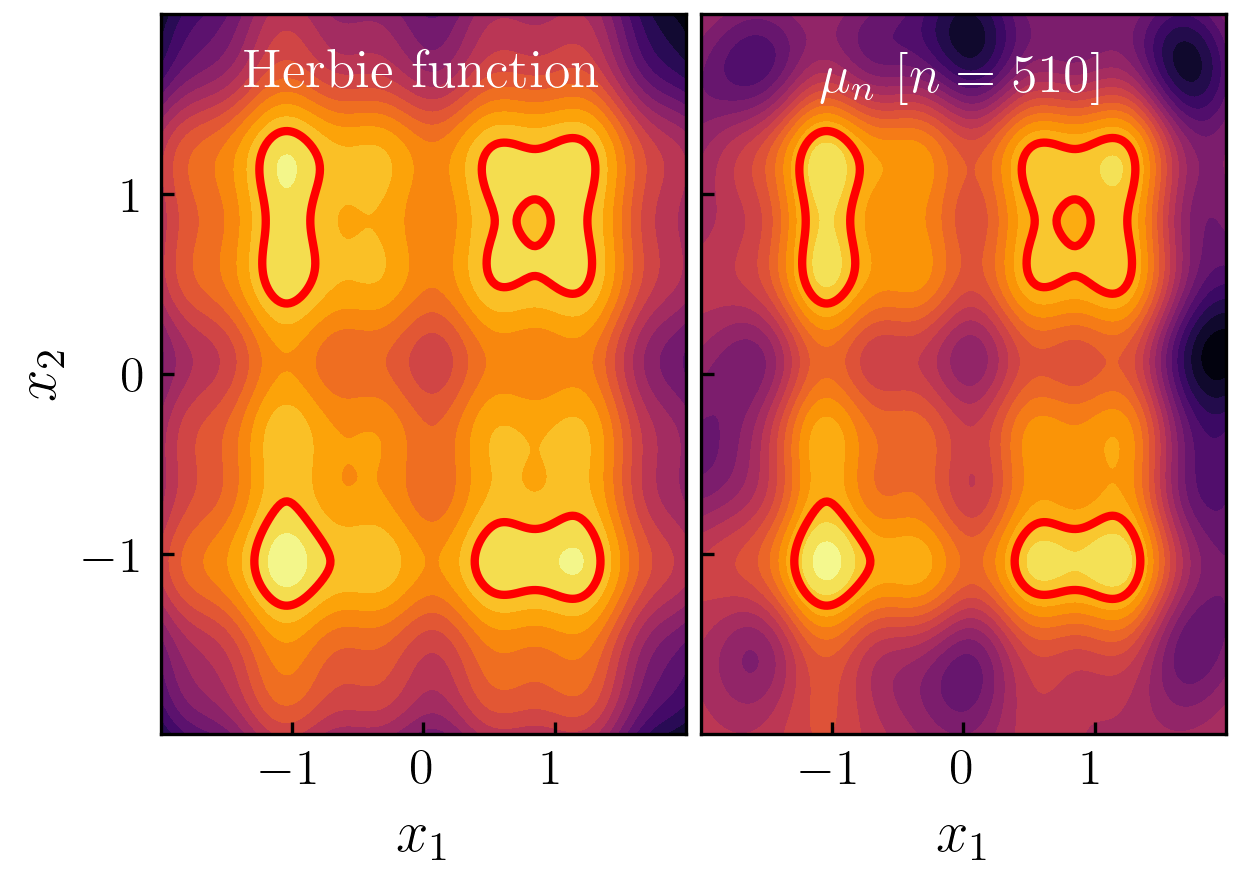}
    \end{subfigure}%
    \begin{subfigure}{.5\textwidth}  
    \centering
    \includegraphics[width=1\linewidth]{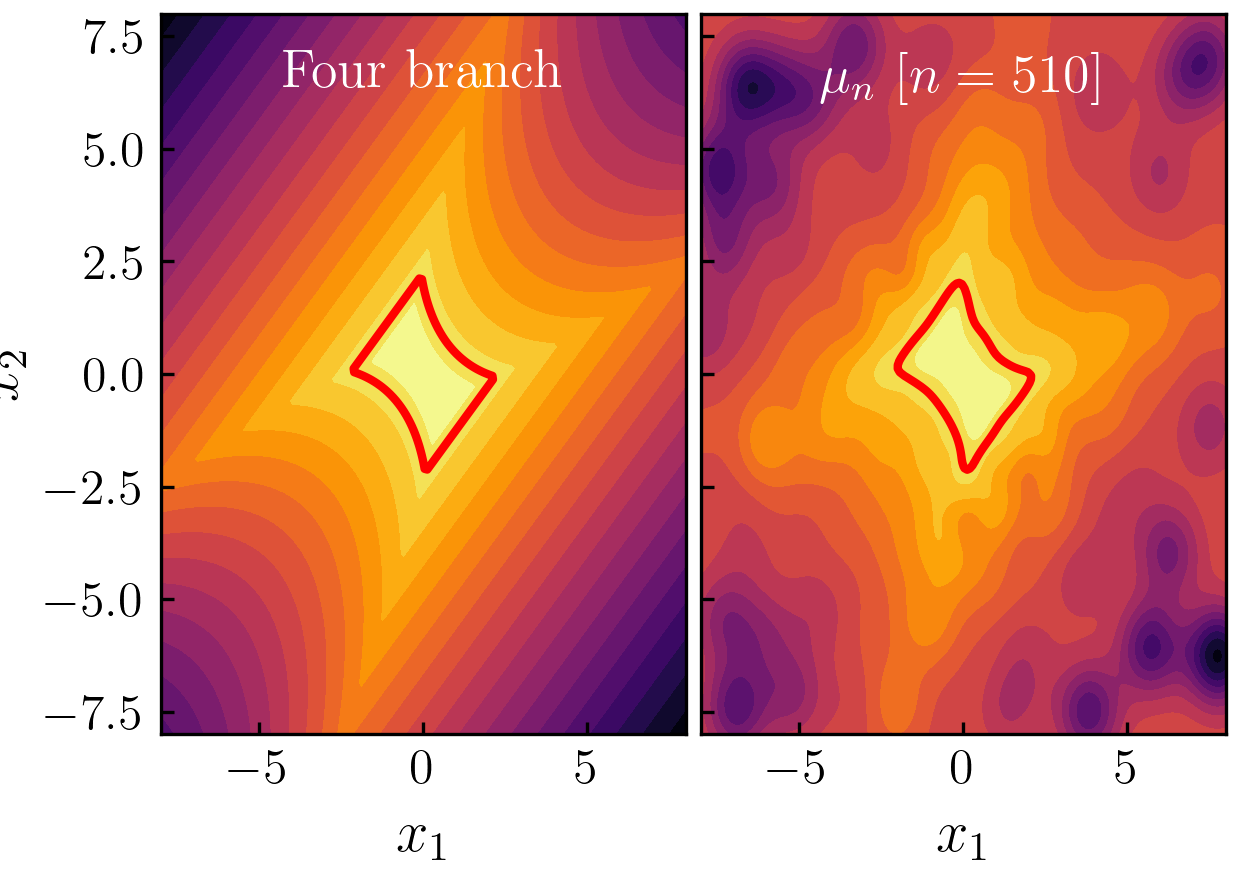}        
    \end{subfigure} \\
    \begin{subfigure}{.48\textwidth}
        \centering
        \includegraphics[width=1\linewidth]{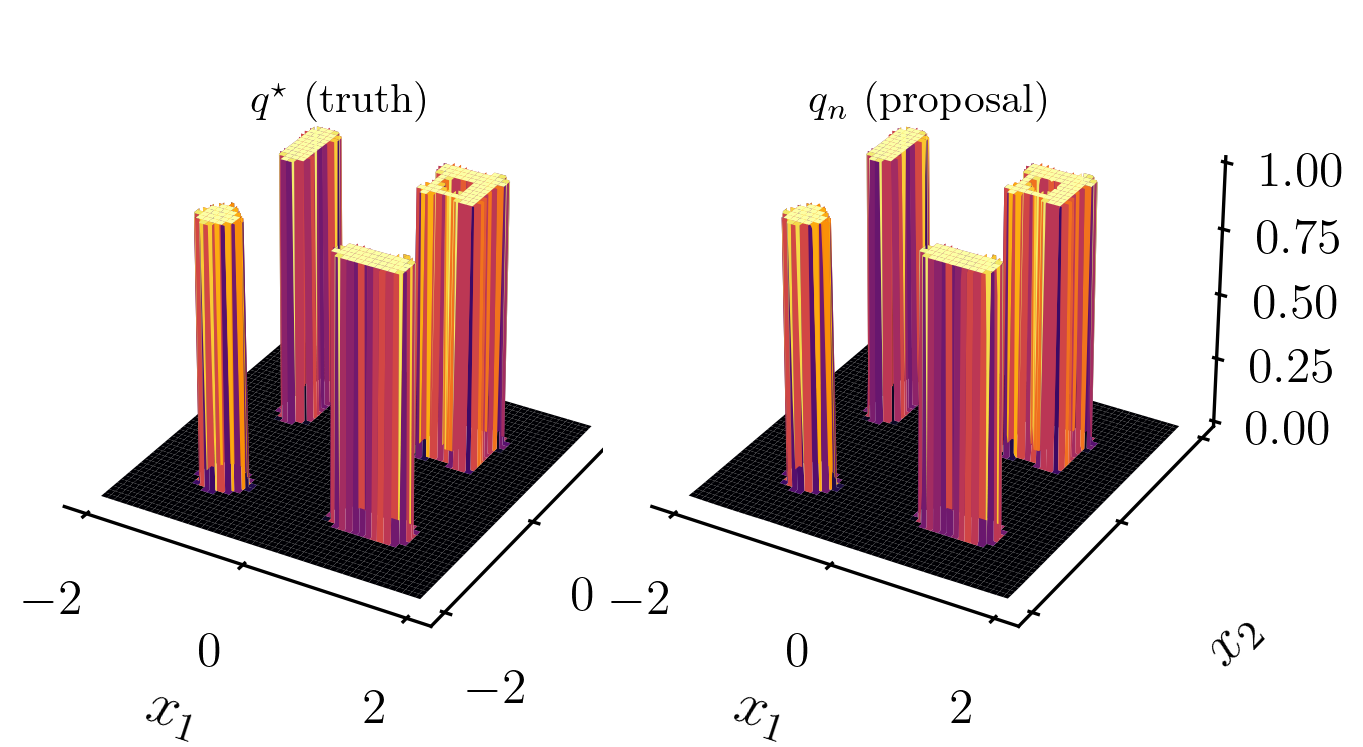}
        \caption{Herbie function}
    \end{subfigure}%
    \begin{subfigure}{.48\textwidth}
        \centering
        \includegraphics[width=1\linewidth]{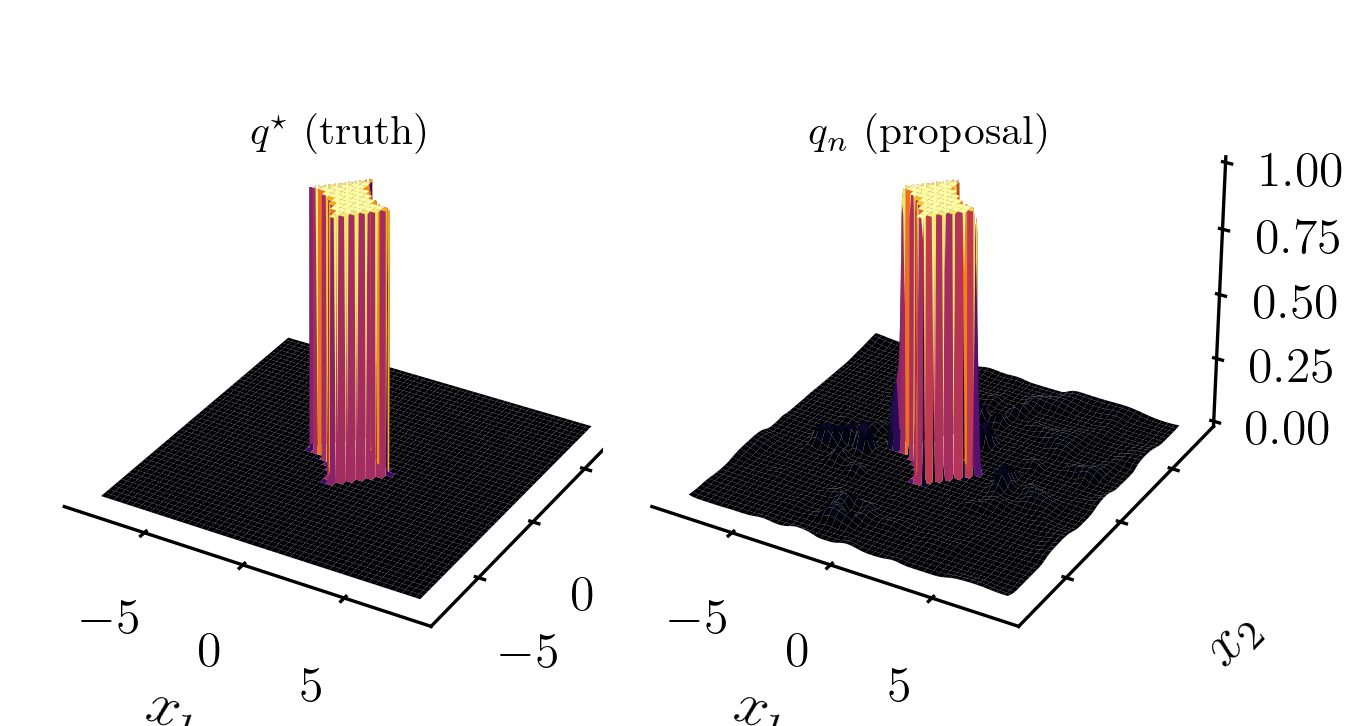}
        \caption{Four branch function}
    \end{subfigure}%
    \caption{\texttt{KDE-AIS} on the Herbie function (a) and four branch function (b). {\it Interior top left:} true function with failure boundaries as red lines. {\it Interior top right:} predicted surrogate failure region after $n=510$. {\it Interior bottom left/right:} optimal/predicted (after $n=510$) IS proposal densities.}
    \label{fig:synthetic_final}
\end{figure}

\Cref{fig:herbie_evolution} shows snapshots of the GP posterior mean, overlaid with seed (black) and acquisition (white) points, for various $n$; the red lines indicate the level set $\hat g_n(\x) = t$. Notice that the $N_0 = 5$ seed points miss $3$ out of the $4$ failure regions and, yet, the acquisitions explore the design space to identify all $4$ failure regions within $n=45$. With increasing $n$, the surrogate converges to the final prediction in about $150-200$ total samples. At $n=510$, there are several samples squarely within the failure regions which would, as shown next, enable accurate failure probability estimation.

We provide the comparison of the final prediction at $n=510$ against the truth in the left side of \Cref{fig:synthetic_final}. The top row shows the GP posterior mean $\mu_n$ (right) and the true $g$ (left), which are closely aligned. Additionally, the bottom row shows the predicted proposal $q_n$ (right) beside the true $q^\star$ (left) -- notice how closely $q_n$ emulates $q^\star$, substantiating the main proposal consistency result from \Cref{thm:prop-conv}. Crucially, $q^\star$ is nonsmooth; yet, our surrogate estimate, despite using smooth prior assumptions, is able to approximate it almost exactly.

The ultimate test of the method is in its ability to accurately estimate $P_F$. The top row of \Cref{fig:synthetic} shows the evolution of $\widehat P_F$ with the number of evaluations; we start with $5$ seed points and run the algorithm for $100$ additional iterations, with $5$ acquisitions each. We try two thresholds $t=2$ and $t=2.122$, which result in true failure probabilities of $0.00199$ and $9.144 \times 10^{-5}$, respectively. The proposed MF-MIS estimator has the most accurate estimate with the smallest variance compared to competing methods. Importantly, notice that the estimate settles down to the final value in $100$ evaluations for $t=2$ and $\sim 200$ evaluations for $t=2.122$ -- indicative of the sample efficiency of the proposed method. In comparison, GPAIS consistently overestimates, and {\tt KDE-AIS}-MIS consistently underestimates.  The two-stage benchmark shows consistently inaccurate estimates irrespective of the choice of the split in the total samples. In methods such as GPAIS, the lack of an automatic means to densely sample $\mcl{X}$ can potentially miss isolated failure regions. This, in turn, leads to incorrect predictions of $q$ and its normalizing constant, resulting in overpredicting $P_F$.
We attribute the accuracy of our MF-MIS estimator to the following reasons. (i) Our input density weighting in $q_n$ balances exploration and exploitation, leading to an accurate emulation of the failure boundaries within a few hundred oracle evaluations (see \Cref{fig:herbie_evolution}), (ii) The surrogate part of our estimator (first term in $\widehat P_{F,n}^\T{MF-MIS}$), with an accurate surrogate model and a very high $M_\T{tot}$, leads to an accurate estimate of $P_F$ with low variance. (iii) Finally, the residual part of our estimator (the second term in $\widehat P_{F,n}^\T{MF-MIS}$) corrects for any bias in the surrogate estimate, leading to an improved estimate of $P_F$. Overall, in addition to emulating the failure boundaries, learning an accurate $q_n$ (that emulates $q^\star$) is crucial to estimating $P_F$ accurately with small data.

\begin{figure}[p]
    \centering
    \includegraphics[width=1\linewidth]{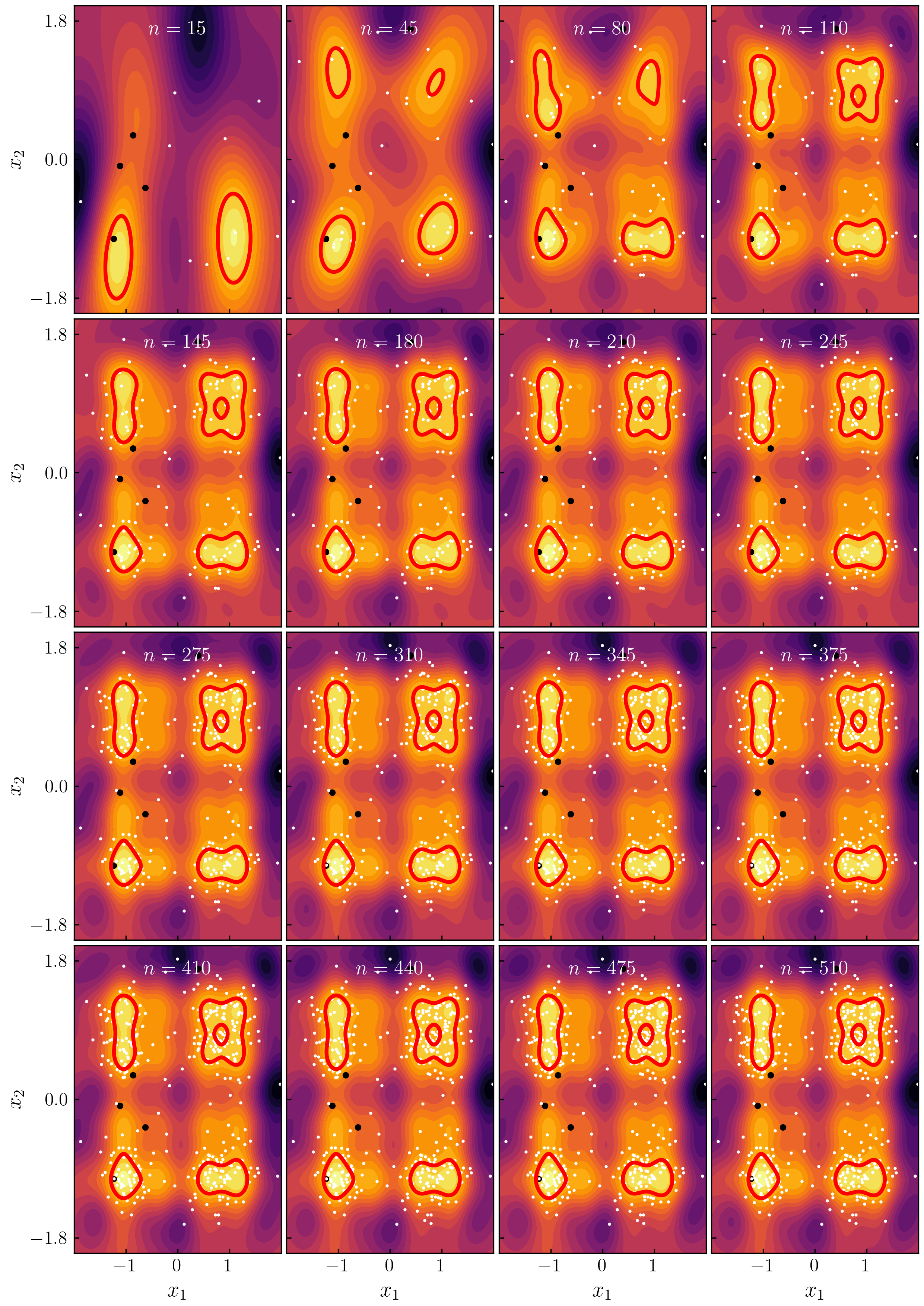}
    \caption{Snapshots of the posterior GP mean ($\mu_n$) for the Herbie function ($t=2.0$). Red lines represent learned limit sate $\widehat g = t$; black circles are $n=5$ seed points; white dots are samples drawn from the proposal $q_n$.}
    \label{fig:herbie_evolution}
\end{figure}

\begin{figure}[p]
    \centering
    \includegraphics[width=1\linewidth]{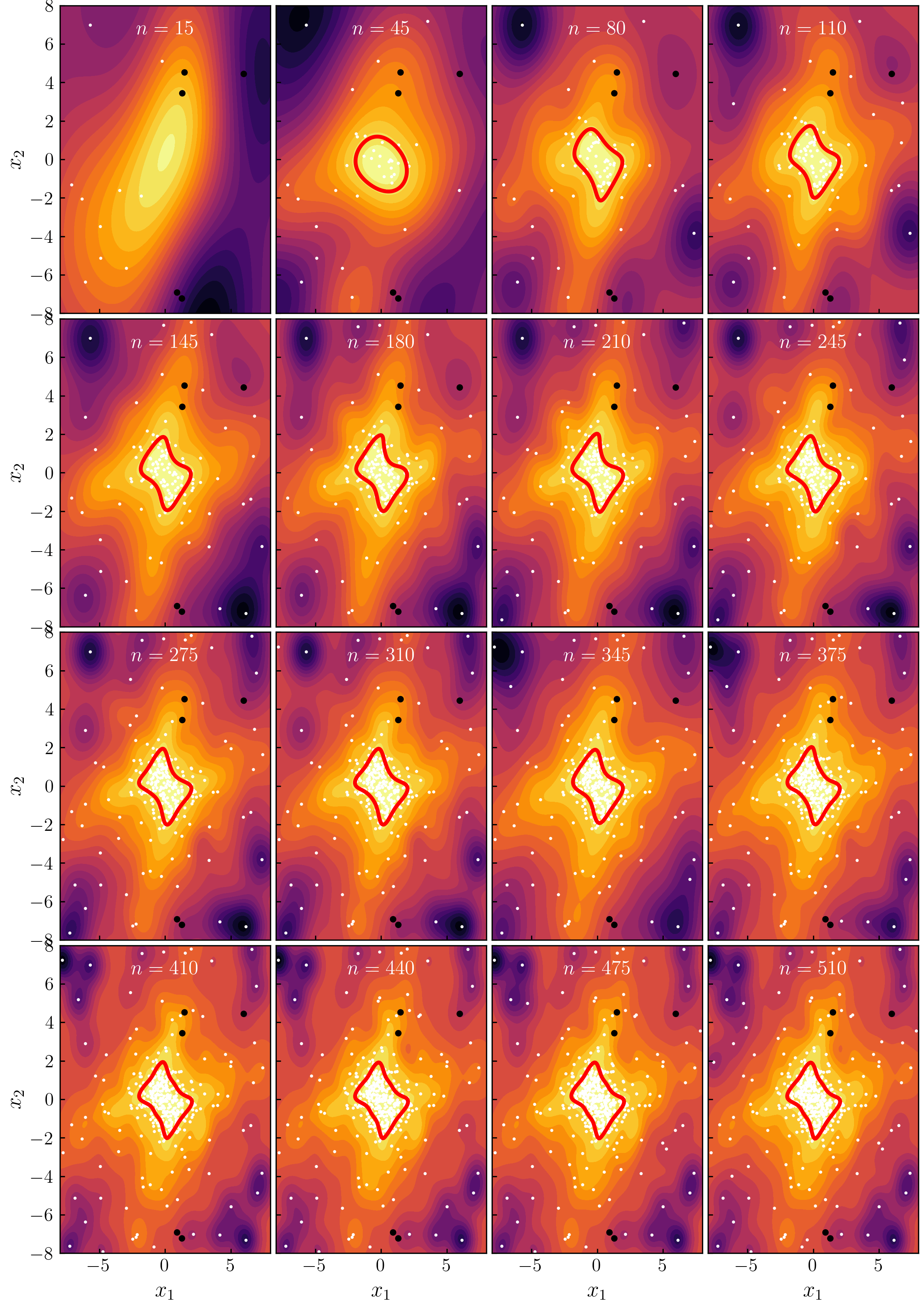}
    \caption{Snapshots of the posterior GP mean ($\mu_n$) for the Four branch function ($t=2.0$). Red lines represent learned limit sate $\widehat g = t$; black circles are $n=5$ seed points; white dots are samples drawn from the proposal $q_n$.}
    \label{fig:fb_evolution}
\end{figure}

\begin{figure}[htb!]
    \centering
    \begin{subfigure}{.5\textwidth}
        \includegraphics[width=1\linewidth]{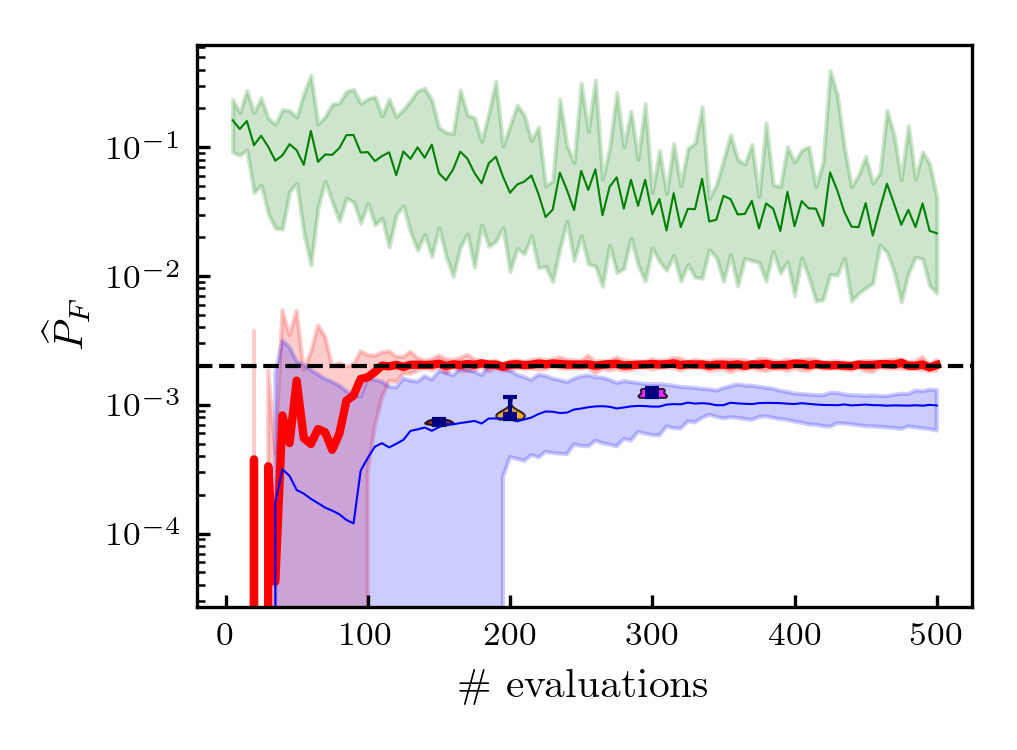}
        \caption{Herbie ($t=2.0$)}
    \end{subfigure}%
    \begin{subfigure}{.5\textwidth}
        \includegraphics[width=1\linewidth]{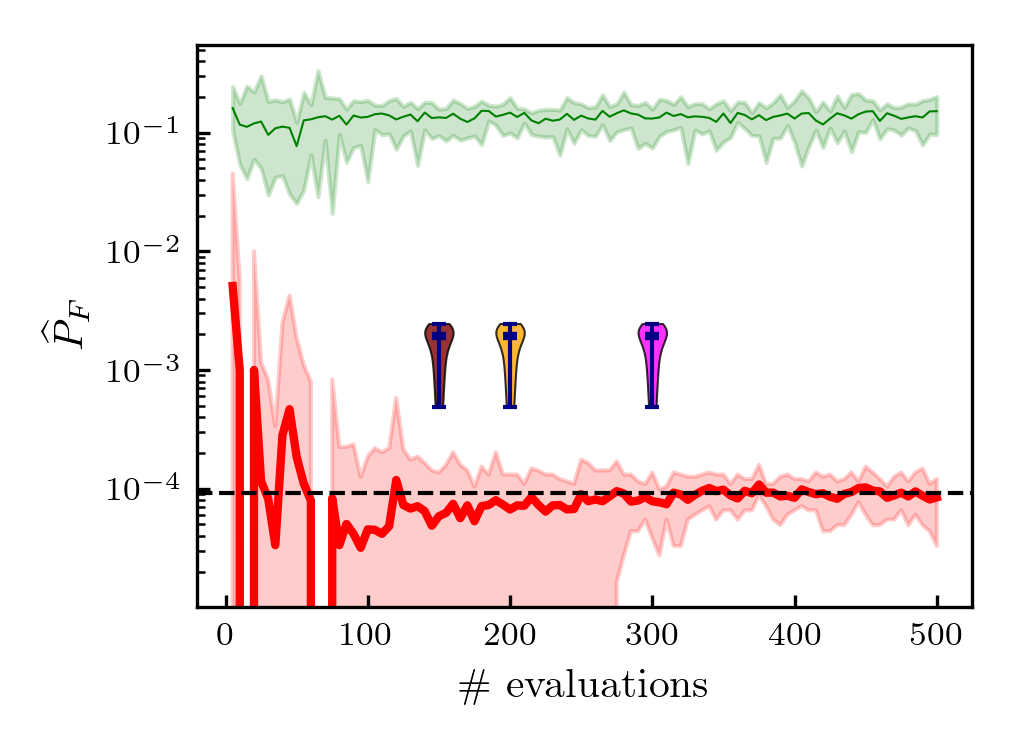}
        \caption{Herbie ($t=2.122$)}
    \end{subfigure}\\
    \begin{subfigure}{.5\textwidth}
        \includegraphics[width=1\linewidth]{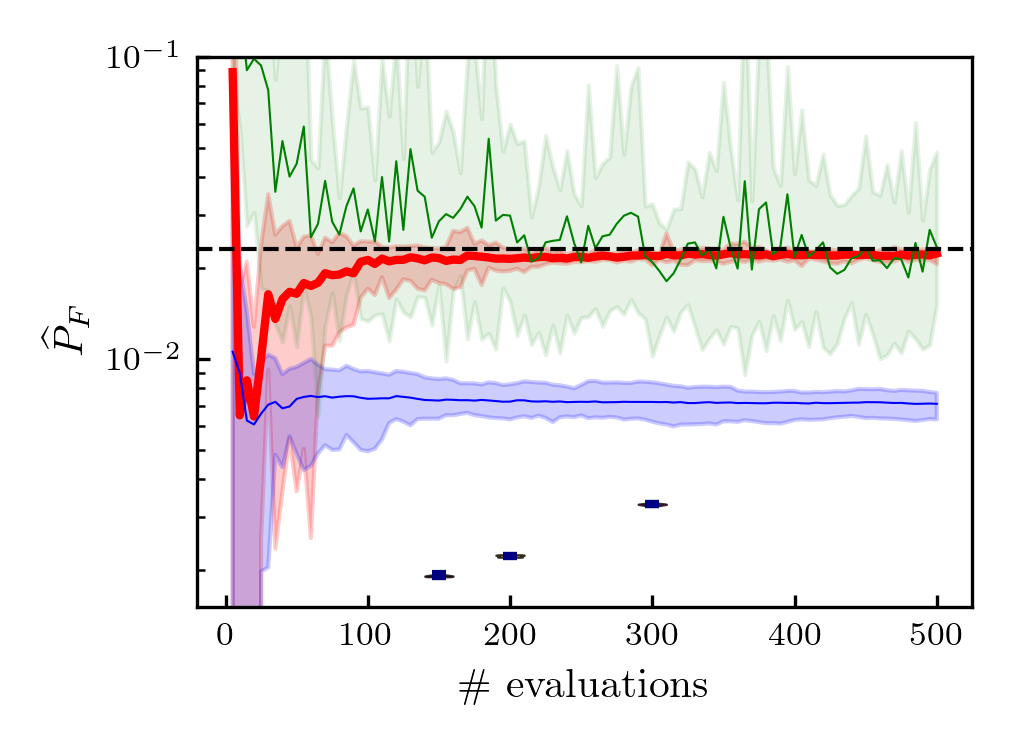}
        \caption{Four branch ($t=2.0$)}
    \end{subfigure}%
    \begin{subfigure}{.5\textwidth}
        \includegraphics[width=1\linewidth]{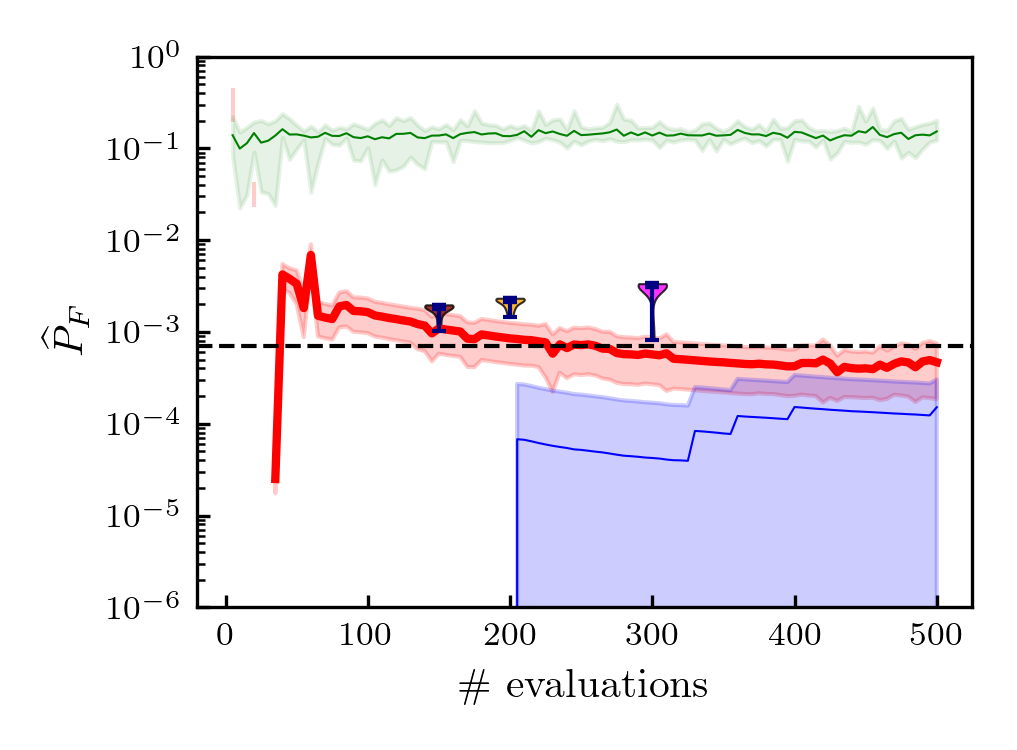}
         \caption{Four branch ($t=3.1$)}
    \end{subfigure}\\
    \begin{subfigure}{1\textwidth}
        \includegraphics[width=1\linewidth]{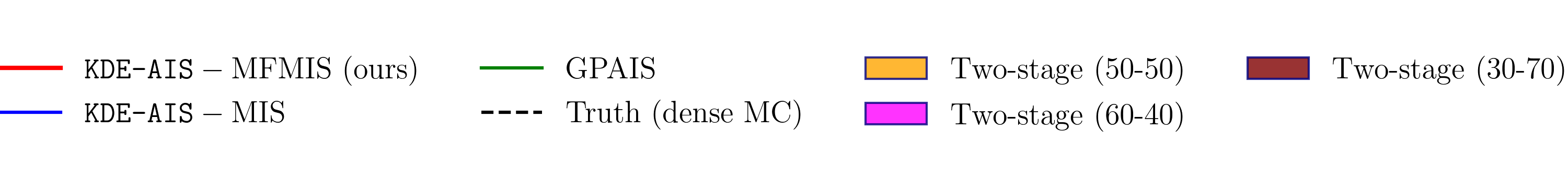}        
    \end{subfigure}
    \caption{Evolution of $\widehat P_F$ against the number of oracle evaluations for the synthetic experiments. Since the two-stage procedures are not sequential, their variability across repetitions is indicated by violin plots at the number of evaluations used for surrogate training. Shaded regions indicate $[\min, \max]$ range.}
    \label{fig:synthetic}
\end{figure}

\subsubsection{Four branch function}
\label{sec:four_branch}
As a second synthetic experiment, we consider the classical four branch
function, also widely used in structural reliability analysis
\cite{schobi2017pck,wang2016gpfs}. For
$\x = (x_1,x_2)^\top \in \mathcal{X} = [-8,8]^2$, the underlying limit-state function
$g : \mathbb{R}^2 \to \mathbb{R}$ is defined as
\[
  g(\x)
  \;=\;
  \min\big\{ g_1(\x), g_2(\x), g_3(\x), g_4(\x) \big\},
\]
with the four branches
\begin{align*}
  g_1(\x)
  &= 3 + 0.1 (x_1 - x_2)^2
     - \frac{x_1 + x_2}{\sqrt{2}},
  \quad
  g_2(\x)
  = 3 + 0.1 (x_1 - x_2)^2
     + \frac{x_1 + x_2}{\sqrt{2}},
  \\
  g_3(\x)
  &= (x_1 - x_2)
     + \frac{7}{\sqrt{2}},
  \quad
  g_4(\x)
  = (x_2 - x_1)
     + \frac{7}{\sqrt{2}}.
\end{align*}

As in the Herbie experiment, we set $p$ to be uniform in $\mcl{X}$ and start the algorithm with $N_0=5$ points, running it for $n=100$ iterations with batches of size $5$. The final comparison of the predicted $g$ and the proposal against the corresponding truths is shown in the right side of \Cref{fig:synthetic_final} -- the conclusions are the same as those made for the Herbie experiment.  \Cref{fig:fb_evolution} shows acquisitions in the same style as \Cref{fig:herbie_evolution}.
The $\widehat P_F$ history is shown in the second row of \Cref{fig:synthetic}, for two different thresholds $t=2$ and $t=3.1$, resulting in (true) failure probabilities $0.0231$ and $0.00071096$, respectively. Similar to the Herbie experiment, the proposed \texttt{KDE-AIS} estimator leads to the most accurate estimate with the least variance, while costing only a few hundred evaluations of the limit state.

\subsection{Real-world experiments}
\begin{figure}[ht!]
    \centering
    \begin{subfigure}{.5\textwidth}
        \includegraphics[width=1\linewidth]{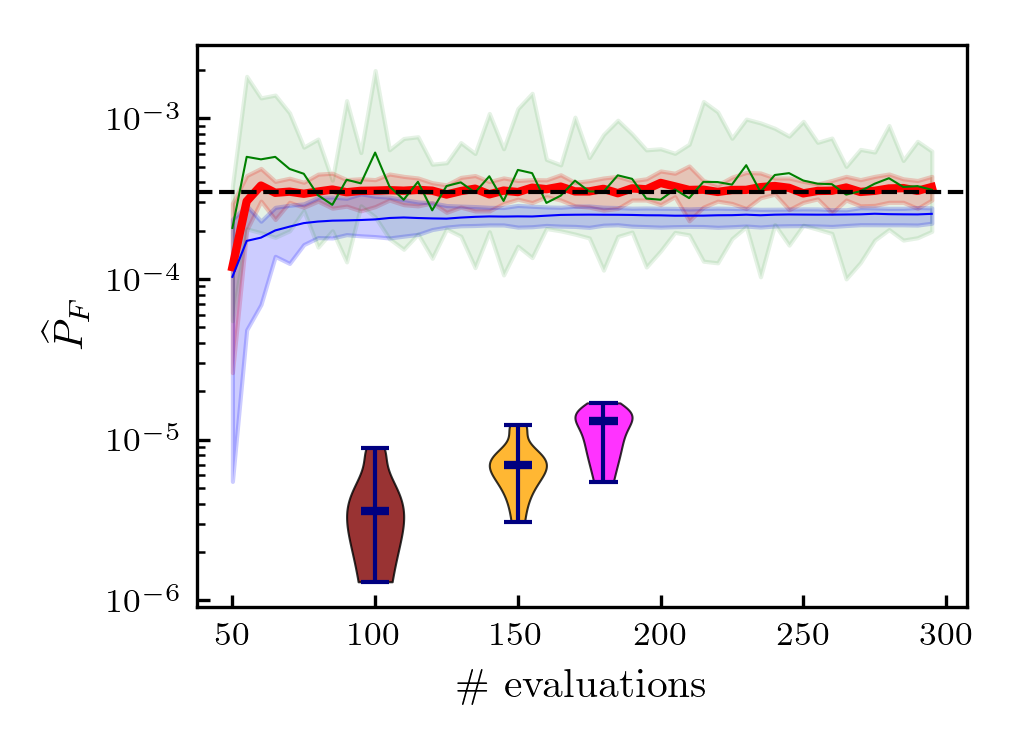}
        \caption{Cantilever beam}
    \end{subfigure}%
    \begin{subfigure}{.5\textwidth}
        \includegraphics[width=1\linewidth]{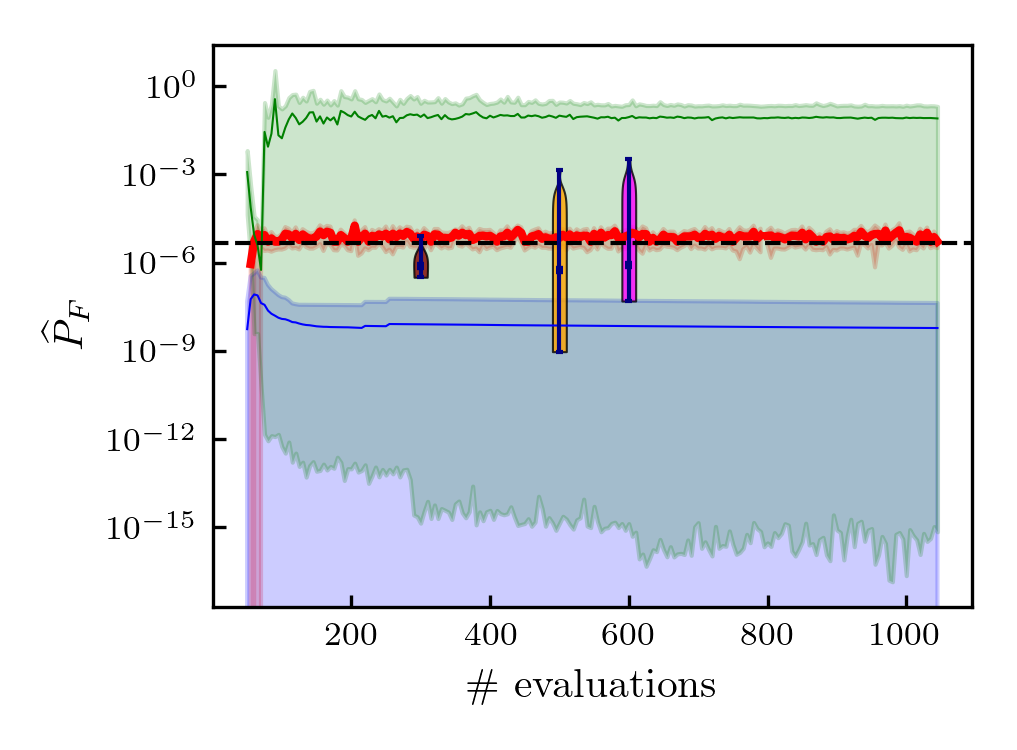}
        \caption{Shaft torsion}
    \end{subfigure}\\
    \begin{subfigure}{1\textwidth}
        \includegraphics[width=1\linewidth]{figures/legend.png}        
    \end{subfigure}
    \caption{Evolution of $\widehat P_F$ against the number of oracle evaluations for the synthetic experiments. Shaded regions indicate $[\min, \max]$ range.}
    \label{fig:realworld}
\end{figure}

\subsubsection{Cantilever beam}
\label{sec:cantilever}
Next, we consider a prismatic cantilever beam under end loads, where the maximum deflection of the beam under the load is used to assess failure. The input vector is
\[
\x=(P,L,E,\Theta)^\top \in\mathbb{R}^4,
\]
where $P$ is the end load, $L$ is the span, $E$ is the Young’s modulus, and $\Theta$ is the thickness. The second moment of area is $I(\Theta)\;=\;\frac{b\,\Theta^3}{12}$,
and the tip deflection under the end load is
\[
\delta(\mathbf{x})\;=\;\frac{P\,L^3}{3\,E\,I(\Theta)} \;=\; \frac{4\,P\,L^3}{E\,b\,\Theta^3}\,.
\]
The limit state function is then defined as
\[
g(\x) = \delta(\x) - D_\mathrm{max},
\]
where $D_\mathrm{max}$ is the maximum displacement, and hence $g(\x) > 0$ is treated as failure. We assume independence and define the input density as
\[
p(\mathbf{x}) \;=\; p_P(p)\,p_L(\ell)\,p_E(e)\,p_{\Theta}(\theta).
\]
A convenient and widely used specification (units in SI) is:
\[
P \sim \mathcal{N}(\mu_P,\sigma_P^2),\qquad
L \sim \mathcal{U}[L_\ell,L_u],\qquad
E \sim \log\mathcal{N}(m_E,s_E^2),\qquad
\Theta \sim \mathcal{U}[\Theta_\ell,\Theta_u],
\]
where
\[
b=0.30\ \text{m},\quad D_{\max}=0.02\ \text{m},\quad
\mu_P=10^4\ \text{N},\ \sigma_P=2\times 10^2\ \text{N},\quad
L\sim\mathcal{U}[3.0,3.1]\ \text{m},
\]
\[
\bar E=2.1\times 10^{11}\ \text{Pa},\ \mathrm{cv}_E=0.05\quad\Rightarrow\quad (m_E,s_E^2)\ \text{as above},\qquad
\Theta\sim\mathcal{U}[0.10,0.20]\ \text{m}.
\]
 A dense MC with $500,000$ samples, repeated independently $100$ times resulted in a $P_F = 0.00035$. The left panel of \Cref{fig:realworld} shows the evolution of $\widehat P_F$, where the proposed $\texttt{KDE-AIS}$ procedure estimates it accurately in about $75$ evaluations.

\subsubsection{Solid round shaft under combined bending and torsion}\label{subsec:shaft_bt}

Finally, we consider a solid circular shaft subjected to combined bending and torsion~\cite{nayek2014reliability}. The input vector is
\[
\x=(M,\;T,\;d,\;\sigma_y,\;G)^\top \in\R^5,
\]
where \(M\) is the bending moment (N\(\cdot\)m), \(T\) is the torque (N\(\cdot\)m), \(d\) is the shaft diameter (m), \(\sigma_y\) is the material yield strength (Pa), and \(G\) is the shear modulus (Pa). The length of the shaft is \(L=1.2\,\text{m}\), the yield safety factor is \(S_F=1.5\), and the twist limit is \(\theta_{\max}=0.06\,\text{rad}\). The failure limit state is
\begin{equation}\label{eq:shaft-limit-state}
g(\x)=\max\!\left(\frac{\sigma_{\T{vm}}(\x)}{\sigma_{\T{allow}}(\x)},\;\frac{\theta(\x)}{\theta_{\max}}\right),
\end{equation}
and $g(\x) > 1$ is considered failure. The stresses and twists are computed relations for a solid circular section given as follows:
\[
\sigma_b(\x) = \frac{32\,M}{\pi\,d^3}, 
\tau(\x) = \frac{16\,T}{\pi\,d^3},
\sigma_{\T{vm}}(\x) = \sqrt{\sigma_b(\x)^2+3\,\tau(\x)^2}, 
\sigma_{\T{allow}}(\x) = \frac{\sigma_y}{S_F},
\theta(\x) = \frac{32\,T\,L}{G\,\pi\,d^4}.
\]
Hence, failure occurs either by yielding \(\big(\sigma_{\T{vm}}>\sigma_{\T{allow}}\big)\) or by excessive twist \(\big(\theta>\theta_{\max}\big)\), whichever is more critical in \eqref{eq:shaft-limit-state}. As in the cantilever beam experiment, 
we take the components of \(\x\) to be independent under \(p\), with
\[
p(\x)=p_M(M)\,p_T(T)\,p_d(d)\,p_{\sigma_y}(\sigma_y)\,p_G(G).
\]
For the loads, we use lognormal models specified as follows:
\[
M\sim\T{LogNormal}(\mu_M,\sigma_M),\quad \mu_M=\ln(450),\;\;\sigma_M=\sqrt{\ln(1+0.25^2)},
\]
\[
T\sim\T{LogNormal}(\mu_T,\sigma_T),\quad \mu_T=\ln(300),\;\;\sigma_T=\sqrt{\ln(1+0.30^2)},
\]
with density \(p_{\T{LN}}(x)=(x\,\sigma)^{-1}(2\pi)^{-1/2}\exp\!\big(-(\ln x-\mu)^2/(2\sigma^2)\big)\) for \(x>0\). For geometry and material, we use truncated normals:
\[
d\sim \T{TN}\!\left(\mu=d_{\T{nom}},\,\sigma=5\!\times\!10^{-4};\; a=d_{\T{nom}}-0.002,\;b=d_{\T{nom}}+0.002\right),
\]
\[
\sigma_y\sim \T{TN}\!\left(\mu=370\!\times\!10^6,\,\sigma=30\!\times\!10^6;\; a=250\!\times\!10^6,\;b=500\!\times\!10^6\right),
\]
\[
G\sim \T{TN}\!\left(\mu=80\!\times\!10^9,\,\sigma=3\!\times\!10^9;\; a=70\!\times\!10^9,\;b=90\!\times\!10^9\right).
\]
Here \(\T{TN}(\mu,\sigma;a,b)\) denotes a Normal\((\mu,\sigma^2)\) truncated to \([a,b]\) with density
\[
p_{\T{TN}}(x)=\frac{\phi\!\left(\frac{x-\mu}{\sigma}\right)}{\sigma\big(\Phi(\frac{b-\mu}{\sigma})-\Phi(\frac{a-\mu}{\sigma})\big)}\;\mathbbm{1}\{a\le x\le b\},
\]

The orders of magnitude difference in the scale of the variables poses a unique challenge in this experiment. A dense MC ($500,000$ samples from $p$) estimate, repeated $100$ times, resulted in a failure probability $P_F = 4.7e-6$. As shown in the right panel of \Cref{fig:realworld}, the proposed $\texttt{KDE-AIS}$ estimator predicts this well with fewer than $100$ evaluations.

\section{Conclusions}
\label{sec:conclusions}

In this work, we have proposed \emph{kernel density estimation adaptive importance sampling} (\texttt{KDE-AIS}), a single-stage sample-efficient framework for estimating rare-event failure probabilities for expensive black-box limit-state functions. Unlike classical two-stage surrogate-assisted approaches that first fit a global surrogate for the limit state and then, in a separate step, construct a biasing density, \texttt{KDE-AIS} treats the design of the importance sampling proposal as the primary goal. The method uses a Gaussian process surrogate for the limit state to construct soft failure probabilities $\pi_n(\x)$, and combines them with the input density $p(\x)$ via a weighted kernel density estimator to approximate the zero-variance proposal $q^\star(\x) \propto p(\x)\,\mathbbm{1}_{\{g(\x) > t\}}$. A slowly vanishing exploration mixture with $p(\x)$ guarantees asymptotically dense sampling of $\mathcal{X}$ and protects against missing isolated or low-probability failure regions. Crucially, the surrogate for the limit state and the proposal for the optimal IS density are learned from the \emph{same} oracle evaluations, which underpins the sample efficiency of our method compared to existing two-stage approaches such as~\cite{peherstorfer2016multifidelity} and Gaussian process based adaptive importance sampling \cite{dalbey2014gpais}.

On the theoretical side, we established that, under mild regularity assumptions on $p$, the surrogate, and the KDE, the KDE-based proposal $q_n$ converges in total variation to the optimal IS density $q^\star$. This is achieved despite the presence of both surrogate error and density-estimation error and is controlled through a suitable exploration schedule and bandwidth choice. We further showed that the multifidelity multiple importance sampling estimator based on the full history of proposals is unbiased for every finite sampling budget and that its variance converges to the oracle variance associated with $q^\star$. These results formally justify the single-stage design and clarify how the GP surrogate, KDE, and exploration mechanism work together to yield an asymptotically optimal importance sampler.

Numerical experiments on synthetic benchmarks (the Herbie and Four Branch functions) and on two engineering reliability problems (a cantilever beam under end loading and a solid round shaft under combined bending and torsion) demonstrate the practical benefits of \texttt{KDE-AIS}. Across these examples, \texttt{KDE-AIS} recovers proposals that are visually and quantitatively close to $q^\star$ with only a few hundred evaluations of $g$ and produces failure probability estimates with substantially reduced variance compared with a single-fidelity MIS, a single-proposal IS, and another GP based approach in the literature, (GPAIS)~\cite{dalbey2014gpais}. 

{\bf Known limitations.} Despite these advantages, \texttt{KDE-AIS} has several limitations that are important to acknowledge. First, the method is built around GP surrogates and KDE, both of which scale poorly with ambient dimension and the number of design points. However, specific approaches exist to scale GPs and KDE to the high-dimensional setting which we plan to explore in the future. Second, the theoretical guarantees rely on smoothness and support assumptions on $p$ and on the failure set; strongly non-smooth limit-state functions, discontinuities, or highly anisotropic behavior may violate these conditions and slow down convergence to $q^\star$. However, we did not face them in the 
experiments investigated in this work. Third, \texttt{KDE-AIS} is designed for settings where $p$ is known and easy to sample from and where the inputs are continuous; discrete, mixed, or strongly constrained design spaces are not handled natively. 

Future work will focus on addressing these limitations and broadening the scope of \texttt{KDE-AIS}. On the modeling side, replacing the KDE with higher-capacity transport-based or normalizing-flow proposals and combining them with sparse or low-rank GP surrogates offers a promising route to improving scalability in moderate to high dimensions while retaining theoretical guarantees. From an algorithmic perspective, it will be important to design adaptive bandwidth and exploration schedules that are tuned online to the evolving surrogate uncertainty and the estimated failure probability, and to develop non-asymptotic performance bounds that explicitly quantify the number of model evaluations required in the rare-event regime. On the application side, integrating \texttt{KDE-AIS} into reliability-based design optimization loops and extending it to system-level and time-dependent reliability problems are natural next steps. 

\appendix
\section{Appendix}

\subsection{Unbiasedness of the MIS balance–heuristic estimator}

\begin{proof}
Index the samples so that $X_i\sim q_{k(i)}$ for a known assignment
$k(i)\in\{0,\dots,n\}$, where $k(i)=0$ denotes the $N_0$ initial draws from $p$.
By linearity of expectation,
\[
\mathbb E\!\left[\widehat P_F^{\,\mathrm{MIS}}\right]
=\frac{1}{N_{\mathrm{tot}}}\sum_{i=1}^{N_{\mathrm{tot}}}
\mathbb E_{q_{k(i)}}\!\left[
\mathbbm{1}_{\{g(X)>t\}}\frac{p(X)}{\bar q_{N_{\mathrm{tot}}}(X)}\right].
\]
Each expectation is an integral under its own proposal:
\[
\mathbb E_{q_{k(i)}}\!\left[
\mathbbm{1}_{\{g(X)>t\}}\frac{p(X)}{\bar q_{N_{\mathrm{tot}}}(X)}\right]
= \int_{\mathcal X}\mathbbm{1}_{\{g(\x)>t\}}\,
\frac{p(\x)}{\bar q_{N_{\mathrm{tot}}}(\x)}\,q_{k(i)}(\x)\,d\x.
\]
Summing over $i$ and dividing by $N_{\mathrm{tot}}$ gives
\[
\mathbb E\!\left[\widehat P_F^{\,\mathrm{MIS}}\right]
= \int_{\mathcal X}\mathbbm{1}_{\{g(\x)>t\}}\,p(\x)\,
\underbrace{\frac{ \frac{1}{N_{\mathrm{tot}}}\sum_{i=1}^{N_{\mathrm{tot}}} q_{k(i)}(\x) }
{ \bar q_{N_{\mathrm{tot}}}(\x) }}_{=\,1}\,d\x,
\]
because by definition
$\frac{1}{N_{\mathrm{tot}}}\sum_{i=1}^{N_{\mathrm{tot}}} q_{k(i)}(\x)
=\frac{n_0}{N_{\mathrm{tot}}}p(\x)+\sum_{k=0}^{n}\frac{N_k}{N_{\mathrm{tot}}}q_k(\x)
= \bar q_{N_{\mathrm{tot}}}(\x)$.
Therefore $\mathbb E[\widehat P_F^{\,\mathrm{MIS}}]=\int \mathbbm{1}_{\{g>t\}}p
= P_F$. 
\end{proof}

\subsection{Proof of \Cref{prop:estimator_for_rn} (unbiased estimation of surrogate error)}
\label{sec:estimator_for_rn}
\begin{proof}
Write \(\pi=\pi_n(\x)\in[0,1]\), \(I=\mathbbm 1_F(\x)\in\{0,1\}\). Since \(I\) is binary and \(\pi\in[0,1]\), we have the pointwise identity
\[
|\pi-I|
=\begin{cases}
1-\pi, & \text{on }F\ (I=1),\\[2pt]
\ \ \ \pi, & \text{on }F^{c}\ (I=0),
\end{cases}
\]
which is equivalently written as
\[
|\pi-I| \;=\; I\,(1-\pi) \;+\; (1-I)\,\pi.
\]
Taking the expectation under \(p\) gives the decomposition
\[
r_n
= \mathbb E_p\!\big[\,\mathbbm 1_F(1-\pi_n)\,\big]
  + \mathbb E_p\!\big[\,\big(1-\mathbbm 1_F\big)\pi_n\,\big]
= \int_F (1-\pi_n)\,p \;+\; \int_{F^c} \pi_n\,p .
\]
Expanding the indicators also yields the equivalent form
\[
r_n
= \mathbb E_p[\mathbbm 1_F] + \mathbb E_p[\pi_n] - 2\,\mathbb E_p[\pi_n\,\mathbbm 1_F]
= P_F + \mathbb E_p[\pi_n] - 2\,\mathbb E_p[\pi_n\,\mathbbm 1_F].
\]

Now, let $\omega_i \;=\; \frac{p(\mathbf{x}_i)}{\bar q_{N_{\text{tot}}}(\mathbf{x}_i)}$ be the importance weights under the MIS estimator and $z_i \;=\; \mathbbm{1}_{\{y_i>t\}}.$ Then, 


\[
\widehat{\mathbb{E}_p\!\big[\mathbbm{1}_F(1-\pi_n)\big]}
\;=\;
\frac{1}{N_{\text{tot}}}\sum_{i=1}^{N_{\text{tot}}}
z_i\,\big(1-\pi_n(\mathbf{x}_i)\big)\,\omega_i,
\qquad
\widehat{\mathbb{E}_p\!\big[(1-\mathbbm{1}_F)\pi_n\big]}
\;=\;
\frac{1}{N_{\text{tot}}}\sum_{i=1}^{N_{\text{tot}}}
(1-z_i)\,\pi_n(\mathbf{x}_i)\,\omega_i,
\] leads to
\[
\widehat r_n
\;=\;
\widehat{\mathbb{E}_p\!\big[\mathbbm{1}_F(1-\pi_n)\big]}
\;+\;
\widehat{\mathbb{E}_p\!\big[(1-\mathbbm{1}_F)\pi_n\big]}.]
\]
which can be further decomposed with
\[
\widehat P_F
\;=\;
\frac{1}{N_{\text{tot}}}\sum_{i=1}^{N_{\text{tot}}}
z_i\,\omega_i,
\qquad
\widehat{\mathbb{E}_p[\pi_n]}
\;=\;
\frac{1}{N_{\text{tot}}}\sum_{i=1}^{N_{\text{tot}}}
\pi_n(\mathbf{x}_i)\,\omega_i,
\qquad
\widehat{\mathbb{E}_p[\pi_n\,\mathbbm{1}_F]}
\;=\;
\frac{1}{N_{\text{tot}}}\sum_{i=1}^{N_{\text{tot}}}
\pi_n(\mathbf{x}_i)\,z_i\,\omega_i,
\]
to give
\[
\boxed{\;
\widehat r_n
\;=\;
\widehat P_F
\;+\;
\widehat{\mathbb{E}_p[\pi_n]}
\;-\;
2\,\widehat{\mathbb{E}_p[\pi_n\,\mathbbm{1}_F]}
\;}
\]
\end{proof}

\subsection{Proof of \Cref{lem:mfmis-var-comparison}}
\label{sec:mfmis_var_proof}

\begin{proof}
Define
\[
Z_i
:=
\mathbbm{1}_{\{\widehat g_n(\mathbf{x}_i)>t\}}
\frac{p(\mathbf{x}_i)}{\bar q_{N_{\mathrm{tot}}}(\mathbf{x}_i)},
\qquad
Y_i
:=
\left[
\mathbbm{1}_{\{ g_n(\mathbf{x}_i)>t\}}
-
\mathbbm{1}_{\{\widehat g_n(\mathbf{x}_i)>t\}}
\right]
\frac{p(\mathbf{x}_i)}{\bar q_{N_{\mathrm{tot}}}(\mathbf{x}_i)}.
\]
Let
\[
\widetilde S_n
:=
\frac{1}{N_{\mathrm{tot}}}\sum_{i=1}^{N_{\mathrm{tot}}} Z_i,
\]
so that the regular MIS estimator can be written as
\[
\widehat P_{F,n}^{\mathrm{MIS}}=\widetilde S_n+R_n.
\]
Likewise, if \(S_n\) is formed from \(M_{\mathrm{tot}}\) independent cheap surrogate samples, then
\[
\widehat P_{F,n}^{\mathrm{MF\text{-}MIS}}=S_n+R_n,
\]
with \(S_n\) independent of \(R_n\).

By construction,
\[
\operatorname{Var}(\widetilde S_n)=\frac{V_{S,n}}{N_{\mathrm{tot}}},
\qquad
\operatorname{Var}(S_n)=\frac{V_{S,n}}{M_{\mathrm{tot}}},
\qquad
\operatorname{Cov}(\widetilde S_n,R_n)=\frac{C_n}{N_{\mathrm{tot}}}.
\]
Therefore,
\[
\operatorname{Var}\!\left(\widehat P_{F,n}^{\mathrm{MIS}}\right)
=
\operatorname{Var}(\widetilde S_n)
+
\operatorname{Var}(R_n)
+
2\,\operatorname{Cov}(\widetilde S_n,R_n)
=
\frac{V_{S,n}}{N_{\mathrm{tot}}}
+
\operatorname{Var}(R_n)
+
\frac{2}{N_{\mathrm{tot}}}C_n,
\]
whereas
\[
\operatorname{Var}\!\left(\widehat P_{F,n}^{\mathrm{MF\text{-}MIS}}\right)
=
\operatorname{Var}(S_n)
+
\operatorname{Var}(R_n)
=
\frac{V_{S,n}}{M_{\mathrm{tot}}}
+
\operatorname{Var}(R_n).
\]
Subtracting gives
\[
\operatorname{Var}\!\left(\widehat P_{F,n}^{\mathrm{MIS}} \right)
-
\operatorname{Var}\!\left(\widehat P_{F,n}^{\mathrm{MF\text{-}MIS}} \right)
=
\left(\frac{1}{N_{\mathrm{tot}}}-\frac{1}{M_{\mathrm{tot}}}\right)V_{S,n}
+\frac{2}{N_{\mathrm{tot}}}C_n.
\]
Hence
\[
\operatorname{Var}\!\left(\widehat P_{F,n}^{\mathrm{MF\text{-}MIS}} \right)
\le
\operatorname{Var}\!\left(\widehat P_{F,n}^{\mathrm{MIS}} \right)
\]
whenever
\[
\left(\frac{1}{N_{\mathrm{tot}}}-\frac{1}{M_{\mathrm{tot}}}\right)V_{S,n}
+\frac{2}{N_{\mathrm{tot}}}C_n \ge 0,
\]
that is,
\[
C_n \ge -\frac12\left(1-\frac{N_{\mathrm{tot}}}{M_{\mathrm{tot}}}\right)V_{S,n}.
\]
\end{proof}
\subsection{Proof of Theorem 1}
\label{sec:thm1_proof}
\begin{proof}[Proof of Theorem~\ref{thm:prop-conv}]
Write $F=\{\x:g(\x)>t\}$.
Let $Z_n:=\int_{\mathcal X}\pi_n(\x)^\alpha p(\x)\,d\x$ and $P_F:=\int_{\mathcal X}\mathbbm 1_F(\x)\,p(\x)\,d\x$.
Define the \emph{unnormalized} measures
\[
\mu_n(d\x):=\pi_n(\x)^\alpha p(\x)\,d\x,
\qquad
\mu^\star(d\x):=\mathbbm 1_F(\x)\,p(\x)\,d\x,
\]
so that the \emph{normalised} densities are
\[
q_n^\dagger=\frac{d\mu_n}{Z_n},\qquad q^\star=\frac{d\mu^\star}{P_F}.
\]
We proceed in three steps.

\medskip
\noindent\textbf{Step 1 (Plug-in convergence $q_n^\dagger\Rightarrow q^\star$).}
Since $u\mapsto u^\alpha$ is $\alpha$-H\"older on $[0,1]$ with constant $1$ -- that is, $|u^\alpha - v^\alpha| \leq |u - v|^\alpha,~\forall u, v \in [0,1]$ -- we state that  for all $\x$
\[
\big|\pi_n(\x)^\alpha-\mathbbm 1_F(\x)^\alpha \big| = \big|\pi_n(\x)^\alpha-\mathbbm 1_F(\x)\big|
\;\le\;
\big|\pi_n(\x)-\mathbbm 1_F(\x)\big|^\alpha.
\]
Integrating against $p$, and using the total variation distance identity between probability measures, yields
\[
\|\mu_n-\mu^\star\|_{\mathrm{TV}}
=\tfrac12\!\int |\pi_n^\alpha-\mathbbm 1_F|\,p(\x) d\x
\;\le\;\tfrac12\,\|\pi_n-\mathbbm 1_F\|_{L^1(p)}^\alpha
=\tfrac12\,r_n^\alpha.
\]
Next, we want to bound the total variation distance between
the normalized densities
\[
q_n^\dagger(\mathbf{x}) 
:= \frac{\pi_n(\mathbf{x})^\alpha p(\mathbf{x})}{Z_n},
\qquad
q^\star(\mathbf{x}) 
:= \frac{\mathbbm{1}_F(\mathbf{x})\,p(\mathbf{x})}{P_F}.
\]

The total variation (TV) distance between the probability measures
with densities $q_n^\dagger$ and $q^\star$ is
\begin{equation*}
\begin{split}
\| q_n^\dagger - q^\star \|_{\mathrm{TV}}
=& \frac{1}{2} \int_{\mcl{X}}
\bigl| q_n^\dagger(\mathbf{x}) - q^\star(\mathbf{x}) \bigr|\,d\mathbf{x} \\
=& \frac{1}{2} \int_{\mcl{X}}
\left| \frac{\pi_n(\x)^\alpha p(\x)}{Z_n} - \frac{\mathbbm{1}_F(\x) p(\x)}{P_F}\right|\,d\mathbf{x} \\
=& \frac{1}{2} \int_{\mcl{X}}
\left| \frac{|\pi_n(\x)^\alpha - \mathbbm{1}_F(\x) | p(\x)}{Z_n P_F} - \frac{\mathbbm{1}_F(\x) p(\x) (P_F - Z_n)}{Z_n P_F}\right|\,d\mathbf{x} \\
\leq & \frac{1}{2} \int_{\mcl{X}}
\left| \frac{|\pi_n(\x)^\alpha - \mathbbm{1}_F(\x) | p(\x)}{Z_n P_F}\right| - \left| \frac{|P_F - Z_n| \mathbbm{1}_F(\x) p(\x) }{Z_n P_F}\right|\,d\mathbf{x} \\
=& \frac{1}{2Z_n P_F} \int_\mcl{X} |\pi_n(\x)^\alpha - \mathbbm{1}_F(\x)| p(\x) d\x + \frac{1}{2 Z_n P_F} |P_F - Z_n| \int_\mcl{X} \mathbbm{1}_F(\x) p(\x) d\x \\
=& \frac{1}{2Z_n P_F} \int_\mcl{X} |\pi_n(\x)^\alpha - \mathbbm{1}_F(\x)| p(\x) d\x + \frac{1}{2 Z_n} |P_F - Z_n| \\
\leq & C_1 r_n^\alpha + C_2 r_n^\alpha.
\end{split}
\end{equation*}
The fourth line in the previous step is due to the triangle inequality, and we have used the fact that $|Z_n-P_F| = \int |\pi_n(\x)^\alpha - \mathbbm{1}_F(\x)|p  \le \int |\pi_n-\mathbbm 1_F|^\alpha\,p = r_n^\alpha$.

\paragraph{Step 2 (KDE convergence $\widehat{q}_n \rightarrow q_n^\dagger$).} Our next step is to show that the KDE converges to the surrogate proposal $q_n^\dagger$. From the pilot samples $\{\mathbf{u}_j\}_{j=1}^{m_n}$, i.i.d.\ $\sim p$,
and the bandwidth $h_n \downarrow 0$, we define the weighted KDE
\[
\widehat{q}_n(\mathbf{x})
\;:=\;
\sum_{j=1}^{m_n} \tilde{w}_{n,j}\,\varphi_{h_n}(\mathbf{x}-\mathbf{u}_j),
\qquad
\tilde{w}_{n,j} \propto \bigl(\pi_n(\mathbf{u}_j)\bigr)^\alpha,
\]
where $\varphi_{h_n}(\mathbf{x}) = h_n^{-d}\,\varphi(\mathbf{x}/h_n)$ and
$\varphi:\mathbb{R}^d\to\mathbb{R}$ is a bounded Lipschitz kernel integrating to $1$.
Assumption~4 further states that the (normalized) weights satisfy $0 \;\le\; \tilde{w}_{n,j} \;\le\; 1, \qquad
\sum_{j=1}^{m_n} \tilde{w}_{n,j} = 1,$
and that under these conditions, the weighted KDE inherits the uniform consistency
rates of the standard KDE. Our goal in this step is to prove that
\[
\bigl\|\widehat{q}_n - q_n^\dagger\bigr\|_\infty
\;=\;
O_p\!\left(
  \sqrt{\frac{\log(m_n)}{m_n h_n^{d}}}
  \;+\;
  h_n^\beta
\right),
\]
where $\beta>0$ is the Hölder exponent from Assumption~1.

For notational convenience, let us write $f_n(\mathbf{x}) := q_n^\dagger(\mathbf{x})$
for the (normalized) surrogate target density; our ultimate objective is to bound
$\bigl\|\widehat{q}_n - q^\dagger_n\bigr\|_\infty$.
We can write the ideal kernel-smoothed target as
\[
(q^\dagger_n * \varphi_{h_n})(\mathbf{x})
:= \int_{\mcl{X}} \varphi_{h_n}(\mathbf{x}-\mathbf{y})\,q^\dagger_n(\mathbf{y})\,d\mathbf{y}.
\]
Note that, for samples drawn from the true density $q_n^\dagger$, the expectation of the KDE at $\x$ is given by
\[\mbb{E}\big[ q_n^\dagger(\x) \big] = \int \varphi(\x- \mbf{y}) q_n^\dagger(\mbf{y}) d \mbf{y} = (q_n^\dagger * \varphi_{h_n})(\x),\]
and hence we call $(q_n^\dagger * \varphi_{h_n})(\x)$ the ``ideal'' target.

We then add and subtract this smoothed target inside the difference:
\begin{align*}
\widehat{q}_n(\mathbf{x}) - q^\dagger_n(\mathbf{x})
&=
\underbrace{
  \Bigl[\widehat{q}_n(\mathbf{x}) - (q^\dagger_n * \varphi_{h_n})(\mathbf{x})\Bigr]
}_{\text{stochastic / variance term}}
+
\underbrace{
  \Bigl[(q^\dagger_n * \varphi_{h_n})(\mathbf{x}) - q^\dagger_n(\mathbf{x})\Bigr]
}_{\text{deterministic bias term}}.
\end{align*}

Taking the supremum over $\mathbf{x}\in\mcl{X}$ and using the triangle inequality, we obtain
\begin{equation}
\label{eq:step2-main-decomp}
\bigl\|\widehat{q}_n - q^\dagger_n\bigr\|_\infty
\;\le\;
\bigl\|\widehat{q}_n - q^\dagger_n * \varphi_{h_n}\bigr\|_\infty
+
\bigl\|q^\dagger_n * \varphi_{h_n} - q^\dagger_n\bigr\|_\infty.
\end{equation}

Thus, we must bound each of the two terms on the right-hand side.

\medskip

\textbf{Step 2.2: Bias term $\bigl\|q^\dagger_n * \varphi_{h_n} - q^\dagger_n\bigr\|_\infty$.}

We now show that the bias term is of order $O(h_n^\beta)$ under the Hölder regularity
assumption.
Assumption~1 states that $p$ is $\beta$-Hölder on $\mcl{X}$. For this step we assume
(in line with the informal reasoning in the theorem) that $q_n^\dagger$ is also
$\beta$-Hölder on $\mcl{X}$, that is, there exists $L<\infty$ (independent of $n$) such that
\[
|q^\dagger_n(\mathbf{x}) - q^\dagger_n(\mathbf{y})|
\;\le\;
L\,\|\mathbf{x}-\mathbf{y}\|^\beta,
\qquad
\forall\,\mathbf{x},\mathbf{y}\in\mcl{X}.
\]

Fix $\mathbf{x}\in\mcl{X}$. Write
\begin{align*}
(q^\dagger_n * \varphi_{h_n})(\mathbf{x}) - q^\dagger_n(\mathbf{x})
&=
\int_{\mcl{X}} \varphi_{h_n}(\mathbf{x}-\mathbf{y})
\bigl[q^\dagger_n(\mathbf{y}) - q^\dagger_n(\mathbf{x})\bigr]\,d\mathbf{y}.
\end{align*}
Taking absolute values gives
\begin{align*}
\bigl|(f_n * \varphi_{h_n})(\mathbf{x}) - f_n(\mathbf{x})\bigr|
&\le
\int_{\mcl{X}} \varphi_{h_n}(\mathbf{x}-\mathbf{y})
\bigl|f_n(\mathbf{y}) - f_n(\mathbf{x})\bigr|\,d\mathbf{y}.
\end{align*}
Using the $\beta$-Hölder continuity of $f_n$,
\[
\bigl|q^\dagger_n(\mathbf{y}) - q^\dagger_n(\mathbf{x})\bigr|
\;\le\;
L\,\|\mathbf{y}-\mathbf{x}\|^\beta,
\]
we obtain
\begin{align*}
\bigl|(q^\dagger_n * \varphi_{h_n})(\mathbf{x}) - q^\dagger_n(\mathbf{x})\bigr|
&\le
L \int_{\mcl{X}} \varphi_{h_n}(\mathbf{x}-\mathbf{y})
\|\mathbf{y}-\mathbf{x}\|^\beta\,d\mathbf{y}.
\end{align*}
Now perform the change of variables
\[
\mathbf{z} = \frac{\mathbf{x}-\mathbf{y}}{h_n},
\qquad
\mathbf{y} = \mathbf{x} - h_n \mathbf{z},
\qquad
d\mathbf{y} = h_n^d\,d\mathbf{z}.
\]
Since $\varphi_{h_n}(\mathbf{x}-\mathbf{y}) = h_n^{-d}\,\varphi(\mathbf{z})$, we have
\begin{align*}
\int_{\mcl{X}} \varphi_{h_n}(\mathbf{x}-\mathbf{y})
\|\mathbf{y}-\mathbf{x}\|^\beta\,d\mathbf{y}
&=
\int_{\mathbb{R}^d} h_n^{-d}\,\varphi(\mathbf{z})
\bigl\|h_n\mathbf{z}\bigr\|^\beta\,h_n^d\,d\mathbf{z} \\
&=
h_n^\beta
\int_{\mathbb{R}^d} \varphi(\mathbf{z})\,\|\mathbf{z}\|^\beta\,d\mathbf{z}.
\end{align*}
By assumption $\varphi$ is bounded and integrable, and $\|\mathbf{z}\|^\beta$ grows at most
polynomially, so the integral
\[
C_\varphi := \int_{\mathbb{R}^d} \varphi(\mathbf{z})\,\|\mathbf{z}\|^\beta\,d\mathbf{z}
\]
is finite. Therefore
\[
\bigl|(q^\dagger_n * \varphi_{h_n})(\mathbf{x}) - q^\dagger_n(\mathbf{x})\bigr|
\;\le\;
L C_\varphi\,h_n^\beta,
\qquad
\forall\,\mathbf{x}\in\mcl{X}.
\]
Taking the supremum over $\mathbf{x}\in\mcl{X}$ yields
\[
\bigl\|q^\dagger_n * \varphi_{h_n} - q^\dagger_n\bigr\|_\infty
\;\le\;
L C_\varphi\,h_n^\beta
=
O(h_n^\beta).
\]

Thus the bias term in \eqref{eq:step2-main-decomp} is of order $O(h_n^\beta)$.

\medskip

\paragraph{Step 2.3: Stochastic term $\bigl\|\widehat{q}_n - q^\dagger_n * \varphi_{h_n}\bigr\|_\infty$.} using standard results in KDE~\cite{gine2004weighted,Tsybakov2009} with our assumptions (that $\varphi$ is bounded and Lipschitz), one can show that the \emph{unnormalized} KDE $\widehat q_n(\x) = \frac{1}{m_n} \sum_{j=1}^{m_n} \varphi_{h_n}(\x - \mbf{y}_j)$ error is bounded by 
\[
\bigl|\widehat{q}_n - q^\dagger_n * \varphi_{h_n}\bigr| \leq \bigl\|\widehat{q}_n - q^\dagger_n * \varphi_{h_n}\bigr\|_\infty
=
O_p\!\left(
  \sqrt{\frac{\log(m_n)}{m_n h_n^{d}}}
\right).
\]





\smallskip

\emph{Weighted KDE case.} Invoking Assumption 4, which asserts that the weighted KDE enjoys the same convergence rate as the unweighted case, we have 
\[
\bigl\|\widehat{q}_n - q^\dagger_n * \varphi_{h_n}\bigr\|_\infty
\;\le\;
C\,
\sqrt{\frac{\log(m_n)}{m_n h_n^{d}}}
\qquad\text{in probability as }n\to\infty.
\]




\textbf{Step 2.4: Combine bias and stochastic terms.}

Returning to the decomposition
\eqref{eq:step2-main-decomp}, we now plug in the bounds obtained in Steps~2.2 and~2.3:
\begin{align*}
\bigl\|\widehat{q}_n - q_n^\dagger\bigr\|_\infty
&=
\bigl\|\widehat{q}_n - q^\dagger_n\bigr\|_\infty \\
&\le
\bigl\|\widehat{q}_n - q^\dagger_n * \varphi_{h_n}\bigr\|_\infty
+
\bigl\|q^\dagger_n * \varphi_{h_n} - q^\dagger_n\bigr\|_\infty \\
&=
O_p\!\left(
  \sqrt{\frac{\log(m_n)}{m_n h_n^{d}}}
\right)
+
O(h_n^\beta).
\end{align*}
Hence
\[
\bigl\|\widehat{q}_n - q_n^\dagger\bigr\|_\infty
=
O_p\!\left(
  \sqrt{\frac{\log(m_n)}{m_n h_n^{d}}}
  \;+\;
  h_n^\beta
\right),
\]
which is exactly the claimed KDE convergence rate in Step~2 of Theorem~1.

\medskip
\noindent\textbf{Step 3 (Mixture closeness and conclusion).}
By definition,
\[
q_n = (1-\eta_n)\,\widehat q_n + \eta_n\,p,
\qquad
q_n - q_n^\dagger
= (1-\eta_n)(\widehat q_n - q_n^\dagger) + \eta_n(p - q_n^\dagger).
\]
Hence, using $\|\cdot\|_{L^1}\le \lambda(\mathcal X)\,\|\cdot\|_\infty$ on a compact domain, where $\lambda(\mathcal X)$ is the Lebesgue measure of the domain $\mathcal{X}$, and
$\|p-q_n^\dagger\|_{L^1}\le 2$ since $p$ and $q_n^\dagger$ are densities,
\[
\|q_n - q_n^\dagger\|_{\mathrm{TV}}
\;\le\; \tfrac12(1-\eta_n)\,\lambda(\mathcal X)\,\|\widehat q_n - q_n^\dagger\|_\infty
\;+\; \eta_n
\;\xrightarrow[n\to\infty]\;0,
\]
since $\eta_n\to0$ and the KDE term vanishes by Step~2.
Finally, by the triangle inequality,
\[
\|q_n - q^\star\|_{\mathrm{TV}}
\;\le\; \|q_n - q_n^\dagger\|_{\mathrm{TV}} + \|q_n^\dagger - q^\star\|_{\mathrm{TV}}
\;\xrightarrow[n\to\infty]{}\;0,
\]
which proves all three displayed claims. Since our surrogate estimate $q_n$ converges to $q^\star$, necessarily $\widehat P_{F, n}^\T{MIS}$ and $\widehat P_{F, n}^\T{MF-MIS}$ converge to $P_F$ with vanishing variance asymptotically.
\end{proof}

\clearpage
\bibliographystyle{plainnat}
\bibliography{sample, ref, other_refs, cv, gp_reliability, kde_is_fp}

@article{echard2011akmcs,
  title={{AK-MCS}: An active learning reliability method combining Kriging and Monte Carlo simulation},
  author={Echard, Benjamin and Gayton, Nicolas and Lemaire, Michel},
  journal={Structural Safety},
  volume={33},
  number={2},
  pages={145--154},
  year={2011},
  doi={10.1016/j.strusafe.2011.01.002}
}

@article{dubourg2013mbis,
  title={Metamodel-based importance sampling for structural reliability analysis},
  author={Dubourg, Vincent and Deheeger, Franck and Sudret, Bruno},
  journal={Probabilistic Engineering Mechanics},
  volume={33},
  pages={47--57},
  year={2013},
  doi={10.1016/j.probengmech.2013.02.002}
}

@article{balesdent2013akis,
  title={Kriging-based adaptive importance sampling algorithms for rare event estimation},
  author={Balesdent, Mathieu and Morio, Jérôme and Marzat, Julien},
  journal={Structural Safety},
  volume={44},
  pages={1--10},
  year={2013},
  doi={10.1016/j.strusafe.2013.05.002}
}

@article{cadini2014improvedAKIS,
  title={An improved adaptive Kriging-based importance technique for sampling multiple failure regions of low probability},
  author={Cadini, Francesco and Santos, Agnese and Zio, Enrico},
  journal={Reliability Engineering \& System Safety},
  volume={131},
  pages={109--117},
  year={2014},
  doi={10.1016/j.ress.2014.06.023}
}

@article{huang2016akss,
  title={Assessing small failure probabilities by AK--SS: An active learning method combining Kriging and Subset Simulation},
  author={Huang, Xianfeng and Chen, Jie and Li, Su},
  journal={Structural Safety},
  volume={59},
  pages={86--95},
  year={2016},
  doi={10.1016/j.strusafe.2015.12.003}
}

@article{zhang2019akss,
  title={An active learning reliability method combining Kriging and subset simulation},
  author={Zhang, Jing and Tong, Zhen and Li, Yiqun and Zhang, Jun},
  journal={Reliability Engineering \& System Safety},
  volume={188},
  pages={90--102},
  year={2019},
  doi={10.1016/j.ress.2019.03.001}
}

@article{bect2017bayesSubSim,
  title={Bayesian subset simulation},
  author={Bect, Julien and Li, Ling and Vazquez, Emmanuel},
  journal={SIAM/ASA Journal on Uncertainty Quantification},
  volume={5},
  number={1},
  pages={762--786},
  year={2017},
  doi={10.1137/16M1078276}
}

@article{xiao2020aksis,
  title={Reliability analysis with stratified importance sampling based on adaptive Kriging},
  author={Xiao, Sinan and Oladyshkin, Sergey and Nowak, Wolfgang},
  journal={Reliability Engineering \& System Safety},
  volume={197},
  pages={106852},
  year={2020},
  doi={10.1016/j.ress.2019.106852}
}

@article{au2001ss,
  title={Estimation of small failure probabilities in high dimensions by subset simulation},
  author={Au, Siu-Kui and Beck, James L.},
  journal={Probabilistic Engineering Mechanics},
  volume={16},
  number={4},
  pages={263--277},
  year={2001},
  doi={10.1016/S0266-8920(01)00019-4}
}

@article{ang1992kernel,
  title={Optimal importance-sampling density estimator},
  author={Ang, A. H-S. and Tang, W. H. (see also Ang et al.)},
  journal={Journal of Engineering Mechanics},
  volume={118},
  number={6},
  pages={1146--1164},
  year={1992},
  doi={10.1061/(ASCE)0733-9399(1992)118:6(1146)}
}

@article{au1999aiskernel,
  title={A new adaptive importance sampling scheme for reliability evaluation},
  author={Au, Siu-Kui and Beck, James L.},
  journal={Structural Safety},
  volume={21},
  number={2},
  pages={135--158},
  year={1999},
  doi={10.1016/S0167-4730(99)00014-4}
}

@article{lee2017kdeis,
  title={An adaptive importance sampling method with a Kriging model and kernel sampling density for structural reliability analysis},
  author={Lee, Soon Man and Song, Jae-Hun and Kim, Nam-Ho},
  journal={Journal of Mechanical Science and Technology},
  volume={31},
  pages={5873--5882},
  year={2017},
  doi={10.1007/s12206-017-1119-8}
}

@article{li2021npis,
  title={Nonparametric importance sampling for wind turbine reliability analysis},
  author={Li, Shiyu and Peng, Yi and Byon, Eui-Young},
  journal={Annals of Applied Statistics},
  volume={15},
  number={4},
  pages={1850--1873},
  year={2021},
  doi={10.1214/20-AOAS1438}
}

@article{persoons2023akis,
  title={Variance reduction using multiple importance sampling with adaptive Kriging (AK-AMIS)},
  author={Persoons, Augustin and Broggi, Matteo and Beer, Michael},
  journal={PhD Thesis / preprint, University of Liverpool},
  year={2023},
  note={Available as: \url{https://livrepository.liverpool.ac.uk/3171699/1/APersoons.pdf}}
}

@article{dasgupta2024rein,
  title={{REIN}: Reliability estimation via importance sampling with normalizing flows},
  author={Dasgupta, Agnimitra and Johnson, Erik},
  journal={Reliability Engineering \& System Safety},
  volume={242},
  pages={109729},
  year={2024},
  doi={10.1016/j.ress.2023.109729}
}

@article{Breunig2008,
  author  = {Robert Breunig},
  title   = {Nonparametric density estimation for stratified samples},
  journal = {Statistics \& Probability Letters},
  year    = {2008},
  volume  = {78},
  number  = {14},
  pages   = {2194--2200},
  doi     = {10.1016/j.spl.2008.01.099},
}

@article{BuskirkLohr2005,
  author  = {Trent D. Buskirk and Sharon L. Lohr},
  title   = {Asymptotic properties of kernel density estimation with complex survey data},
  journal = {Journal of Statistical Planning and Inference},
  year    = {2005},
  volume  = {128},
  number  = {1},
  pages   = {165--190},
  doi     = {10.1016/j.jspi.2003.09.036},
}

@article{booth2025two,
  title={Two-stage design for failure probability estimation with Gaussian process surrogates},
  author={Booth, Annie S and Renganathan, S Ashwin},
  journal={Journal of Quality Technology},
  pages={1--17},
  year={2025},
  publisher={Taylor \& Francis}
}

@article{der2000geometry,
  title={The geometry of random vibrations and solutions by FORM and SORM},
  author={Der Kiureghian, A},
  journal={Probabilistic Engineering Mechanics},
  volume={15},
  number={1},
  pages={81--90},
  year={2000},
  publisher={Elsevier}
}

@article{hu2021second,
  title={Second-order reliability methods: a review and comparative study},
  author={Hu, Zhangli and Mansour, Rami and Olsson, M{\aa}rten and Du, Xiaoping},
  journal={Structural and multidisciplinary optimization},
  volume={64},
  number={6},
  pages={3233--3263},
  year={2021},
  publisher={Springer}
}

@incollection{reynolds2015gaussian,
  title={Gaussian mixture models},
  author={Reynolds, Douglas},
  booktitle={Encyclopedia of biometrics},
  pages={827--832},
  year={2015},
  publisher={Springer}
}

@article{zhao1999response,
  title={Response uncertainty and time-variant reliability analysis for hysteretic MDF structures},
  author={Zhao, YG and Ono, T and Idota, H},
  journal={Earthquake engineering \& structural dynamics},
  volume={28},
  number={10},
  pages={1187--1213},
  year={1999},
  publisher={Wiley Online Library}
}

@book{madsen2006methods,
  title={Methods of structural safety},
  author={Madsen, Henrik O and Krenk, Steen and Lind, Niels Christian},
  year={2006},
  publisher={Courier Corporation}
}

@article{huang2018reliability,
  title={Reliability analysis of slope stability under seismic condition during a given exposure time},
  author={Huang, HW and Wen, SC and Zhang, J and Chen, FY and Martin, JR and Wang, H},
  journal={Landslides},
  volume={15},
  number={11},
  pages={2303--2313},
  year={2018},
  publisher={Springer}
}

@article{huang2017overview,
  title={Overview of structural reliability analysis methods—Part I: Local reliability methods},
  author={Huang, Changwu and El Hami, Abdelkhalak and Radi, Boucha{\"\i}b},
  journal={Incertitudes et fiabilit{\'e} des syst{\`e}mes multiphysiques},
  volume={17},
  number={1},
  pages={1--10},
  year={2017}
}

@article{chevalier2014fast,
  title={Fast parallel kriging-based stepwise uncertainty reduction with application to the identification of an excursion set},
  author={Chevalier, Cl{\'e}ment and Bect, Julien and Ginsbourger, David and Vazquez, Emmanuel and Picheny, Victor and Richet, Yann},
  journal={Technometrics},
  volume={56},
  number={4},
  pages={455--465},
  year={2014},
  publisher={Taylor \& Francis}
}

@article{azzimonti2021adaptive,
  title={Adaptive design of experiments for conservative estimation of excursion sets},
  author={Azzimonti, Dario and Ginsbourger, David and Chevalier, Cl{\'e}ment and Bect, Julien and Richet, Yann},
  journal={Technometrics},
  volume={63},
  number={1},
  pages={13--26},
  year={2021},
  publisher={Taylor \& Francis}
}

@article{duhamel2023version,
  title={A SUR version of the Bichon criterion for excursion set estimation},
  author={Duhamel, Cl{\'e}ment and Helbert, C{\'e}line and Munoz Zuniga, Miguel and Prieur, Cl{\'e}mentine and Sinoquet, Delphine},
  journal={Statistics and Computing},
  volume={33},
  number={2},
  pages={41},
  year={2023},
  publisher={Springer}
}

@article{gine2004weighted,
  title={Weighted uniform consistency of kernel density estimators},
  author={Gine, Evarist and Koltchinskii, Vladimir and Zinn, Joel},
  journal={The Annals of Probability},
  volume={32},
  number={3B},
  pages={2570--2605},
  year={2004},
  doi={10.1214/009117904000000063}
}

@article{lee2011herbie,
  author  = {Lee, H. K. H. and Gramacy, R. B. and Linkletter, C. and Gray, G. A.},
  title   = {Optimization Subject to Hidden Constraints via Statistical Emulation},
  journal = {Pacific Journal of Optimization},
  volume  = {7},
  number  = {3},
  pages   = {467--478},
  year    = {2011}
}

@article{dalbey2014gpais,
  author  = {Dalbey, Keith R. and Swiler, Laura P.},
  title   = {Gaussian Process Adaptive Importance Sampling},
  journal = {International Journal for Uncertainty Quantification},
  volume  = {4},
  number  = {2},
  year    = {2014},
  doi     = {10.1615/Int.J.UncertaintyQuantification.2013006330}
}

@inproceedings{romero2016pof,
  author       = {Romero, Vicente J. and Swiler, Laura P. and Ebeida, Mohamed S. and Mitchell, Scott A.},
  title        = {A Set of Test Problems and Results in Assessing Method Performance for Calculating Low Probabilities of Failure},
  booktitle    = {18th AIAA Non-Deterministic Approaches Conference},
  year         = {2016},
  address      = {San Diego, CA},
  organization = {AIAA},
  note         = {AIAA Paper 2016-0429},
  doi          = {10.2514/6.2016-0429}
}

@article{schobi2017pck,
  author  = {Sch\"obi, R. and Sudret, B. and Marelli, S.},
  title   = {Rare Event Estimation Using Polynomial-Chaos Kriging},
  journal = {ASCE-ASME Journal of Risk and Uncertainty in Engineering Systems, Part A: Civil Engineering},
  volume  = {3},
  number  = {2},
  year    = {2017},
  doi     = {10.1061/AJRUA6.0000870}
}

@article{wang2016gpfs,
  author  = {Wang, Hongqiao and Lin, Guang and Li, Jinglai},
  title   = {Gaussian Process Surrogates for Failure Detection: A {B}ayesian Experimental Design Approach},
  journal = {Journal of Computational Physics},
  volume  = {313},
  pages   = {247--259},
  year    = {2016},
  doi     = {10.1016/j.jcp.2016.02.053}
}

@article{Rosenblatt1956, author={Rosenblatt, M.}, title={Remarks on some nonparametric estimates of a density function}, journal={Ann. Math. Statist.}, year={1956}, volume={27}, number={3}, pages={832--837}}

@article{Parzen1962, author={Parzen, E.}, title={On estimation of a probability density function and mode}, journal={Ann. Math. Statist.}, year={1962}, volume={33}, number={3}, pages={1065--1076}}

@book{silverman1986, author={Silverman, B. W.}, title={Density Estimation for Statistics and Data Analysis}, publisher={Chapman \& Hall}, year={1986}}

@book{Scott2015, author={Scott, D. W.}, title={Multivariate Density Estimation}, edition={2}, publisher={Wiley}, year={2015}}

@book{WandJones1995, author={Wand, M. P. and Jones, M. C.}, title={Kernel Smoothing}, publisher={Chapman \& Hall}, year={1995}}

@article{TerrellScott1992, author={Terrell, G. R. and Scott, D. W.}, title={Variable kernel density estimation}, journal={Ann. Statist.}, year={1992}, volume={20}, number={3}, pages={1236--1265}}

@book{Tsybakov2009, author={Tsybakov, A. B.}, title={Introduction to Nonparametric Estimation}, publisher={Springer}, year={2009}}

@article{Botev2010, author={Botev, Z. I. and Grotowski, J. F. and Kroese, D. P.}, title={Kernel density estimation via diffusion}, journal={Ann. Statist.}, year={2010}, volume={38}, number={5}, pages={2916--2957}}

@book{Owen2013, author={Owen, A. B.}, title={Monte Carlo Theory, Methods and Examples}, publisher={Stanford University}, year={2013}}

@inproceedings{VeachGuibas1995, author={Veach, E. and Guibas, L. J.}, title={Optimally combining sampling techniques for Monte Carlo rendering}, booktitle={Proc. SIGGRAPH}, year={1995}, pages={419--428}}

@article{ElviraMIS, author={Elvira, V. and Martino, L. and Luengo, D. and Bugallo, M. F.}, title={Generalized multiple importance sampling}, journal={arXiv:1511.03095}, year={2019}}

@article{jones1998efficient,
  doi          = {10.1023/A:1008306431147},
  title={Efficient global optimization of expensive black-box functions},
  author={Jones, Donald R. and Schonlau, Matthias and Welch, William J.},
  journal={Journal of Global Optimization},
  volume={13},
  number={4},
  pages={455--492},
  year={1998},
  publisher={Springer}
}

@book{rasmussen:williams:2006,
  doi       = {10.7551/mitpress/3206.001.0001},
  author = {Rasmussen, C. E. and Williams, C. K. I.},
  publisher = {MIT Press},
  title = {Gaussian Processes for Machine Learning},
  year = 2006
}

@book{gramacy2020surrogates,
  title = {Surrogates: {G}aussian Process Modeling, Design and Optimization for the Applied Sciences},
  author = {Robert B. Gramacy},
  publisher = {Chapman Hall/CRC},
  address = {Boca Raton, Florida},
  url = {http://bobby.gramacy.com/surrogates},
  year = {2020}
}

@book{santner2003design,
	doi = {10.1007/978-1-4757-3799-8},
	year = 2003,
	author = {Thomas J. Santner and Brian J. Williams and William I. Notz},
	title = {The Design and Analysis of Computer Experiments},
  publisher={Springer}
}

@article{nayek2014reliability,
  title={Reliability approximation for solid shaft under Gamma setup},
  author={Nayek, Sadananda and Seal, Babulal and Roy, Dilip},
  journal={Journal of Reliability and Statistical Studies},
  pages={11--17},
  year={2014}
}

@incollection{silverman1986kernel,
  title={The kernel method for univariate data},
  author={Silverman, Bernard W},
  booktitle={Density estimation for statistics and data analysis},
  pages={34--74},
  year={1986},
  publisher={Springer}
}

@article{epanechnikov1969non,
  title={Non-parametric estimation of a multivariate probability density},
  author={Epanechnikov, Vassiliy A},
  journal={Theory of Probability \& Its Applications},
  volume={14},
  number={1},
  pages={153--158},
  year={1969},
  publisher={SIAM}
}

@article{booth2025contour,
  title={Contour location for reliability in airfoil simulation experiments using deep gaussian processes},
  author={Booth, Annie S and Renganathan, S Ashwin and Gramacy, Robert B},
  journal={The Annals of Applied Statistics},
  volume={19},
  number={1},
  pages={191--211},
  year={2025},
  publisher={Institute of Mathematical Statistics}
}

@article{naveau2020statistical,
  title={Statistical methods for extreme event attribution in climate science},
  author={Naveau, Philippe and Hannart, Alexis and Ribes, Aur{\'e}lien},
  journal={Annual Review of Statistics and Its Application},
  volume={7},
  number={1},
  pages={89--110},
  year={2020},
  publisher={Annual Reviews}
}

@article{love2012credible,
  title={Credible occurrence probabilities for extreme geophysical events: Earthquakes, volcanic eruptions, magnetic storms},
  author={Love, Jeffrey J},
  journal={Geophysical Research Letters},
  volume={39},
  number={10},
  year={2012},
  publisher={Wiley Online Library}
}

@article{bugallo2017adaptive,
  title={Adaptive importance sampling: The past, the present, and the future},
  author={Bugallo, Monica F and Elvira, Victor and Martino, Luca and Luengo, David and Miguez, Joaquin and Djuric, Petar M},
  journal={IEEE Signal Processing Magazine},
  volume={34},
  number={4},
  pages={60--79},
  year={2017},
  publisher={IEEE}
}

@article{oh1992adaptive,
  title={Adaptive importance sampling in Monte Carlo integration},
  author={Oh, Man-Suk and Berger, James O},
  journal={Journal of statistical computation and simulation},
  volume={41},
  number={3-4},
  pages={143--168},
  year={1992},
  publisher={Taylor \& Francis}
}

@book{srinivasan2002importance,
  title={Importance sampling: Applications in communications and detection},
  author={Srinivasan, Rajan},
  year={2002},
  publisher={Springer Science \& Business Media}
}

@inproceedings{gotovos2013active,
  title={Active learning for level set estimation},
  author={Gotovos, Alkis and Casati, Nathalie and Hitz, Gregory and Krause, Andreas},
  booktitle={Twenty-Third International Joint Conference on Artificial Intelligence},
  year={2013}
}

@article{li2011efficient,
  title={An efficient surrogate-based method for computing rare failure probability},
  author={Li, Jing and Li, Jinglai and Xiu, Dongbin},
  journal={Journal of Computational Physics},
  volume={230},
  number={24},
  pages={8683--8697},
  year={2011},
  publisher={Elsevier}
}

@article{marques2018contour,
  title={Contour location via entropy reduction leveraging multiple information sources},
  author={Marques, Alexandre N and Lam, Remi R and Willcox, Karen E},
  journal={arXiv preprint arXiv:1805.07489},
  year={2018}
}

@article{cole2021entropy,
  title={Entropy-based adaptive design for contour finding and estimating reliability},
  author={Cole, D Austin and Gramacy, Robert B and Warner, James E and Bomarito, Geoffrey F and Leser, Patrick E and Leser, William P},
  journal={arXiv preprint arXiv:2105.11357},
  year={2021}
}

@article{bichon2008efficient,
  title={Efficient global reliability analysis for nonlinear implicit performance functions},
  author={Bichon, Barron J and Eldred, Michael S and Swiler, Laura Painton and Mahadevan, Sandaran and McFarland, John M},
  journal={AIAA journal},
  volume={46},
  number={10},
  pages={2459--2468},
  year={2008}
}

@article{bect2012sequential,
  title={Sequential design of computer experiments for the estimation of a probability of failure},
  author={Bect, Julien and Ginsbourger, David and Li, Ling and Picheny, Victor and Vazquez, Emmanuel},
  journal={Statistics and Computing},
  volume={22},
  number={3},
  pages={773--793},
  year={2012},
  publisher={Springer}
}

@article{picheny2010adaptive,
  title={Adaptive designs of experiments for accurate approximation of a target region},
  author={Picheny, Victor and Ginsbourger, David and Roustant, Olivier and Haftka, Raphael T and Kim, Nam-Ho},
  year={2010}
}

@article{peherstorfer2016multifidelity,
  title={Multifidelity importance sampling},
  author={Peherstorfer, Benjamin and Cui, Tiangang and Marzouk, Youssef and Willcox, Karen},
  journal={Computer Methods in Applied Mechanics and Engineering},
  volume={300},
  pages={490--509},
  year={2016},
  publisher={Elsevier}
}

@article{ranjan2008sequential,
  title={Sequential experiment design for contour estimation from complex computer codes},
  author={Ranjan, Pritam and Bingham, Derek and Michailidis, George},
  journal={Technometrics},
  volume={50},
  number={4},
  pages={527--541},
  year={2008},
  publisher={Taylor \& Francis}
}

@article{renganathan2023camera,
  title={CAMERA: A method for cost-aware, adaptive, multifidelity, efficient reliability analysis},
  author={Renganathan, S Ashwin and Rao, Vishwas and Navon, Ionel M},
  journal={Journal of Computational Physics},
  volume={472},
  pages={111698},
  year={2023},
  publisher={Elsevier}
}

@inproceedings{booth2024actively,
  title={Actively learning deep Gaussian process models for failure contour and probability estimation.},
  author={Booth, Annie S and Gramacy, Robert and Renganathan, Ashwin},
  booktitle={AIAA SCITECH 2024 Forum},
  pages={0577},
  year={2024}
}

\end{document}